\renewcommand\footnotetextcopyrightpermission[1]{} 
 \let\mathscr\relax
\newcolumntype{x}[1]{>{\centering\arraybackslash\hspace{0pt}}p{#1}}
\newcolumntype{L}[1]{>{\raggedright\let\newline\\\arraybackslash\hspace{0pt}}m{#1}}
\newcolumntype{C}[1]{>{\centering\let\newline\\\arraybackslash\hspace{0pt}}m{#1}}
\newcolumntype{R}[1]{>{\raggedleft\let\newline\\\arraybackslash\hspace{0pt}}m{#1}}
\definecolor{lightgray}{gray}{0.92}
\definecolor{applegreen}{rgb}{0.13, 0.67, 0.8}
\renewcommand{\@seccntformat}[1]{\csname the#1\endcsname\ \ }
\newcommand{\BigO}[1]{\ensuremath{\operatorname{O}\bigl(#1\bigr)}}
\newcommand{\rfig}[1]{Figure~\ref{#1}}
\newcommand{\rcor}[1]{Corollary.~\ref{#1}}
\newcommand{\rtab}[1]{Table~\ref{#1}}
\newcommand{\rsec}[1]{Section~\ref{#1}}
\newcommand{\rthm}[1]{Theorem~\ref{#1}}
\newcommand{\Naive}[0]{Na\"{i}ve\xspace}
\newcommand{\naively}[0]{na\"{i}vely\xspace}
\algnewcommand\algorithmicparfor{\textbf{par-for}}
\algnewcommand\algorithmicpardo{\textbf{do}}
\algnewcommand\algorithmicendparfor{\textbf{end\ par-for}}
\algrenewcommand\algorithmicindent{1.0em}%
\newcommand{\MyComment}[1]{}
\algnewcommand\algorithmicmywhile{\textbf{while}}
\algnewcommand\algorithmicmywhiledo{\textbf{do}}
\algnewcommand\algorithmicendmywhile{\textbf{end while}}
\theoremstyle{plain}
\newtheorem{thm}{Theorem}[section]
\newtheorem{cor}{Corollary}[section]
\theoremstyle{definition}
\newcommand{\nomorph}{No PMR}
\newcommand{\naivemorph}{\Naive{} PMR}
\newcommand{\smartmorph}{Cost-Based PMR}
\begin{document}

\title{\textsc{Pattern Morphing} for Efficient Graph Mining}

\author{Kasra Jamshidi}
\affiliation{%
\department{School of Computing Science}
  \institution{Simon Fraser University}
  \state{British Columbia, Canada}
}
\email{kjamshid@cs.sfu.ca}

\author{Keval Vora}
\affiliation{%
\department{School of Computing Science}
  \institution{Simon Fraser University}
  \state{British Columbia, Canada}
}
\email{keval@cs.sfu.ca}

\begin{abstract}
Graph mining applications analyze the structural properties of large graphs, and they do so by finding subgraph isomorphisms, which makes them computationally intensive. Existing graph mining techniques including both custom graph mining applications and general-purpose graph mining systems, develop efficient execution plans to speed up the exploration of the given query patterns that represent subgraph structures of interest. 

In this paper, we step beyond the traditional philosophy of optimizing the execution plans for a given set of patterns, and exploit the sub-structural similarities across different query patterns. We propose \textsc{Pattern Morphing}, a technique that enables structure-aware algebra over patterns to accurately infer the results for a given set of patterns using the results of a completely different set of patterns that are less expensive to compute. Pattern morphing \emph{morphs} (or converts) a given set of query patterns into alternative patterns, while retaining full equivalency. It is a general technique that supports various operations over matches of a pattern beyond just counting (e.g., support calculation, enumeration, etc.), making it widely applicable to various graph mining applications like Motif Counting and Frequent Subgraph Mining. Since pattern morphing mainly transforms query patterns before their exploration starts, it can be easily incorporated in existing general-purpose graph mining systems. We evaluate the effectiveness of pattern morphing by incorporating it in Peregrine, a recent state-of-the-art graph mining system, and show that pattern morphing significantly improves the performance of different graph mining applications. 
\end{abstract}

\maketitle

\begin{figure}[t]
\includegraphics[width=0.99\linewidth]{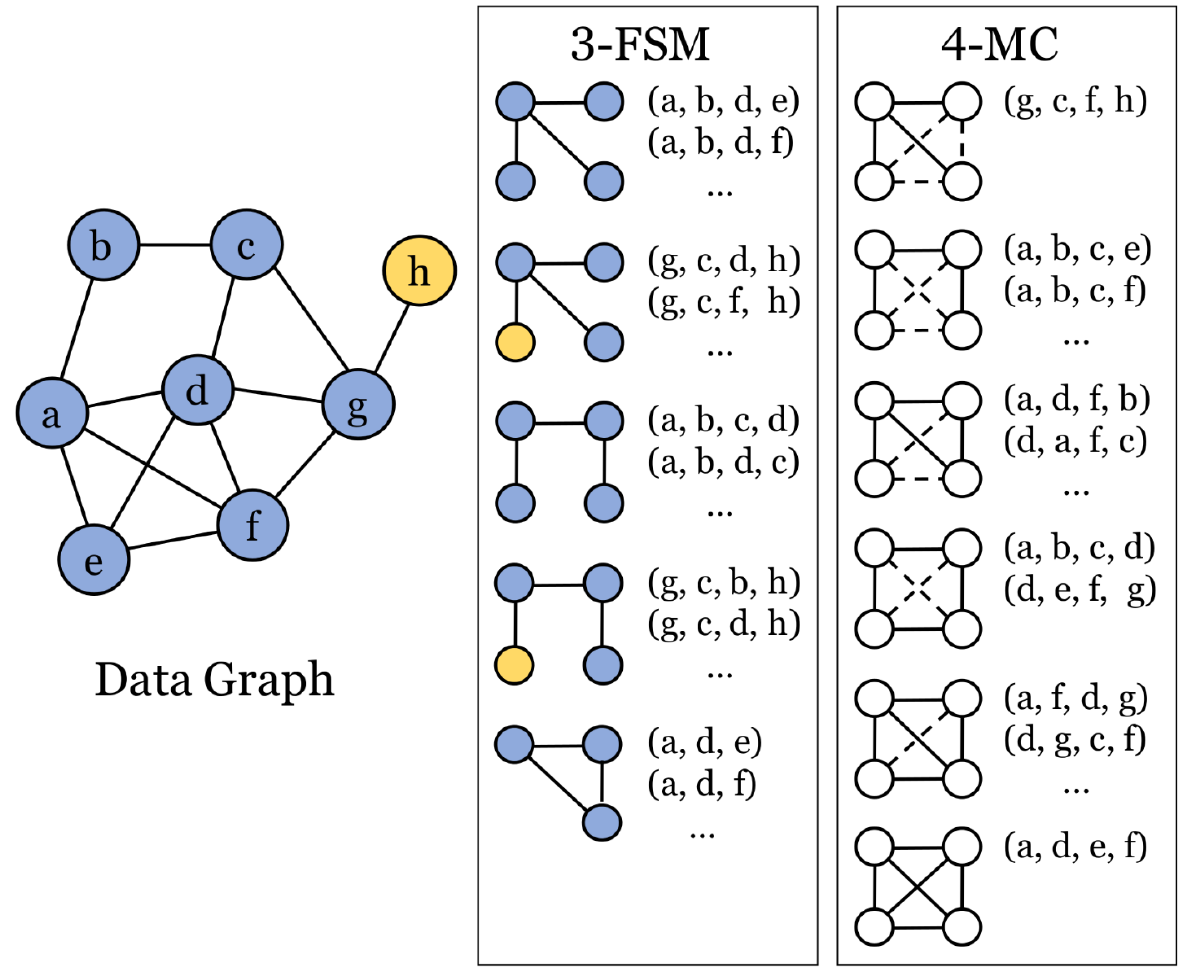}
  \caption{Example of vertex-induced patterns v/s edge-induced patterns in graph mining applications. \\ 3-FSM typically explores (labeled) edge-induced matches, and there are three size-3 pattern topologies (i.e., topologies containing 3 edges) which form five patterns with distinct labelings from the data graph. 4-MC typically explores (unlabeled) vertex-induced matches, and there are six size-4 patterns (i.e., patterns containing 4 vertices).}
\label{fig-backgroundfsmmc}
\end{figure}

\begin{table*}[t]
  \begin{tabular}{c | c c | c c | c c}
    & \multicolumn{2}{c|}{4-Cycle} & \multicolumn{2}{c|}{Chordal 4-Cycle} & \multicolumn{2}{c}{5-Cycle} \\ 
    & Edge-Induced & Vertex-Induced & Edge-Induced & Vertex-Induced & Edge-Induced & Vertex-Induced \\
    & \includegraphics[width=.5in]{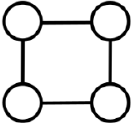} & \includegraphics[width=.5in]{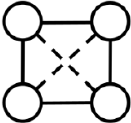} & \includegraphics[width=.5in]{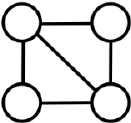} & \includegraphics[width=.5in]{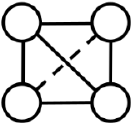} & \includegraphics[width=.5in,angle=90]{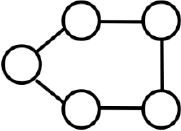} & \includegraphics[width=.5in,angle=90]{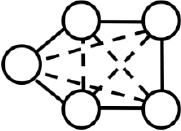} \\
    \midrule
    Mico & 2.09s &  \textbf{1.89s} & \textbf{0.08s} & 3.04s & 258.90s & \textbf{23.56s} \\
    YouTube & \textbf{27.84s} & 31.91s & \textbf{1.62s} & 8.07s & 111.89s &  \textbf{67.03s} \\
    \midrule
  \end{tabular}
  \caption{Execution times (in seconds) for matching various patterns in Peregrine~\cite{peregrine}.}
  \label{tab-performancemotivation}
\end{table*}

\section{Introduction}
With growing interest in analyzing graph-based data, graph mining applications are being widely used across several domains including bioinformatics, computer vision, malware detection, and social network analysis~\cite{fsm-sec,motifs-sna,motifs-sna2,motifs-bio,maxclique-sna,motifsbiology,fsmcompvis,fsmbiology}.

Several general-purpose graph mining systems have been developed in literature~\cite{arabesque,rstream,fractal,gminer,automine,peregrine,pangolin}. These systems provide an efficient runtime that explores the subgraphs of interest along with a high-level programming model to guide the exploration process and perform computations based on the explored subgraphs. Recent systems like Peregrine~\cite{peregrine} take a pattern-aware approach where the graph mining applications are expressed as pattern matching sub-tasks, and the exploration plan directly finds the matching subgraphs, without performing expensive isomorphism and automorphism checks for every match. 

With such an approach, details about how to match the patterns naturally become part of the pattern itself. For instance, \rfig{fig-backgroundfsmmc} shows the difference in the patterns expressed for Frequent Subgraph Mining (FSM) and Motif Counting (MC) applications. Since Frequent Subgraph Mining typically computes frequencies based on \emph{edge-induced} matches (i.e., subgraphs that match all the edges in the pattern), the three size-3 topologies (i.e., topologies containing 3 `edges') are simply expressed by the edges that are present in the pattern. Motif Counting, on the other hand, usually counts the occurrences of \emph{vertex-induced} matches (i.e., subgraphs that match all the edges between vertices in the data graph), and hence the six size-4 patterns (i.e., patterns containing 4 `vertices') contain \emph{anti-edges}~\cite{peregrine} indicating that edges should not be present between those vertices. General-purpose graph mining systems often provide flexibility to use different exploration models for different workloads: for instance, FSM can be performed in edge-induced manner or vertex-induced manner on Peregrine~\cite{peregrine}, and switching between the two requires just a single line change in its FSM program.  

The performance of graph mining systems is mainly dependent on three factors: (1) how many (partial and complete) matches are present in the data graph (i.e., the structure of the data graph); (2) how expensive it is to match the given set of patterns (i.e., the structure of the patterns); and, (3) what to do once the matches have been found (i.e., the application semantics). \rfig{fig-fsmmotifbreakdownmotivation} shows the performance breakdown of 3-FSM and 4-MC on two graphs; as we can see, the majority of time in MC is spent on finding matches and little time is taken for its counting aggregation, whereas the performance of FSM is dominated by performing aggregation (computing pattern supports~\cite{fsm-mni}) over the matches. Hence, it is crucial to reduce both the time taken for matching, and the time taken to perform aggregation in order to enable high performance graph mining.

Existing systems already incorporate efficient pattern matching algorithms; however, they 
limit their analysis to only those set of patterns that are explicitly requested by the mining application. In this work, we aim to exploit the structural relationships between different patterns to accelerate the overall mining task. To understand how we achieve that, first we describe a key insight.

\begin{figure}[t]
  \includegraphics[width=0.5\textwidth]{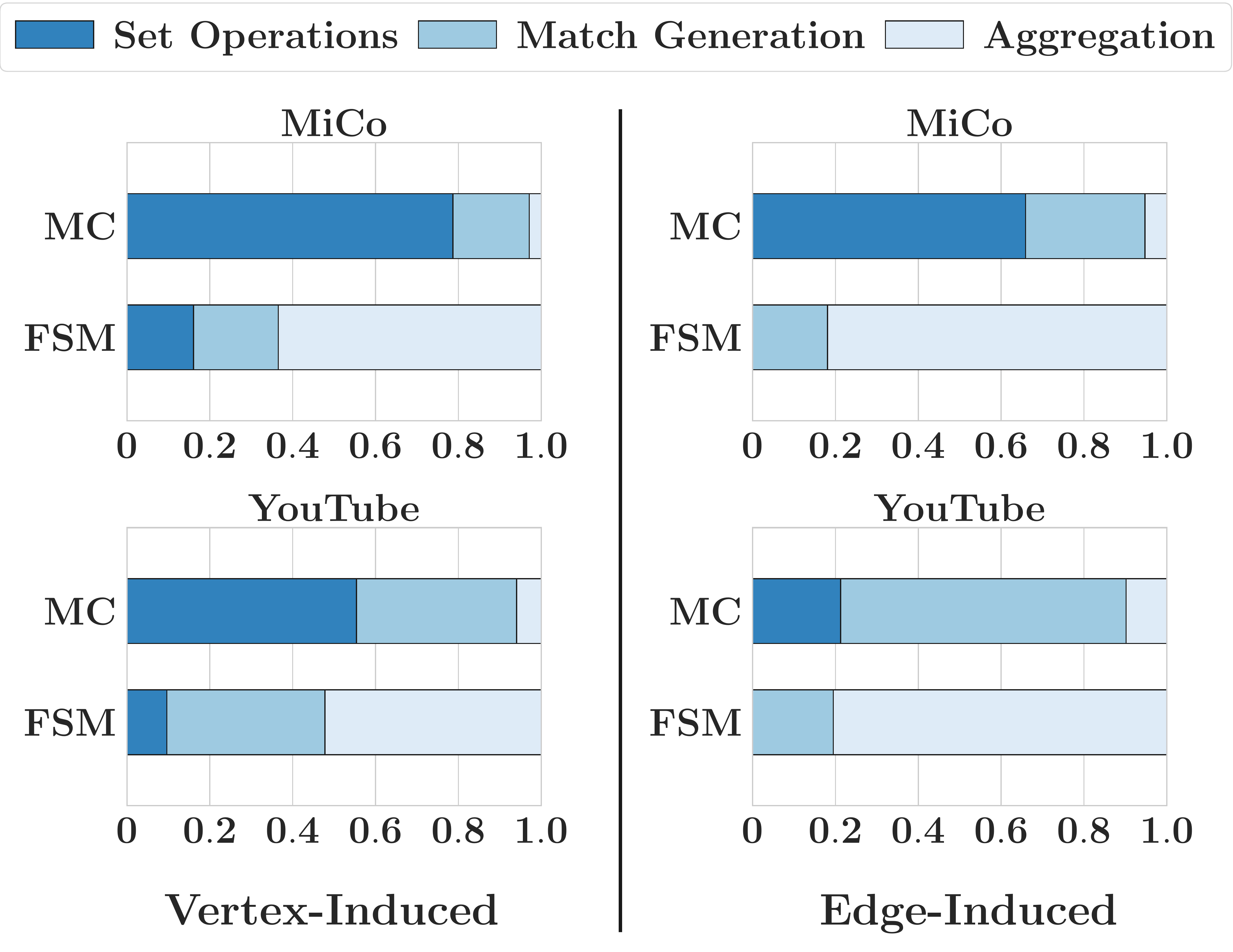}
  \caption{Performance of Frequent Subgraph Mining (FSM) and Motif Counting (MC) on Mico~\cite{mico} and YouTube~\cite{youtube} graphs in Peregrine~\cite{peregrine} with vertex-induced and edge-induced exploration.}
  \label{fig-fsmmotifbreakdownmotivation}
\end{figure}

\paragraph{Observations \& Key Insight.}
\rtab{tab-performancemotivation} shows the time taken for matching three patterns in vertex-induced and edge-induced manner on two data graphs. As we can see, there is no correlation between performance of edge-induced patterns and their respective vertex-induced variants; for instance, matching the edge-induced chordal 4-cycle is faster than matching the vertex-induced chordal 4-cycle, whereas matching the edge-induced 5-cycle is slower than matching its vertex-induced variant. This is because even though anti-edges provide pruning benefits for vertex-induced patterns at each exploration step, enforcing them using set differences can be more expensive than performing set intersections in edge-induced variants. Furthermore, two patterns appearing structurally similar to each other can end up taking drastically different amounts time; for instance, an edge-induced 4-cycle and an edge-induced chordal 4-cycle differ by only one edge, and still matching the former is over 26$\times$ slower than matching the latter.

These drastic variations in performance across structurally similar patterns present opportunities to perform analysis across different patterns, including those that are not explicitly requested by the mining application. This means \emph{the structural similarities between different patterns can be exploited to infer the results for a given set of patterns using the results of a different set of patterns that are relatively inexpensive to compute}.

\paragraph{Pattern Morphing.}
In this paper, we propose \textsc{Pattern Morphing}, a technique that enables structure-aware algebra over patterns to \emph{morph} (or convert) the query patterns into alternative patterns, which can then be used to quickly compute accurate final results for the original query patterns. We first formalize the semantics of pattern morphing to understand how patterns can be morphed into alternative patterns while retaining full equivalency with the original patterns so that correct final results can be guaranteed. Then we develop an automatic pattern morphing engine that reformulates the original graph mining query into a query on the morphed (alternative) patterns, hence exploiting alternative matching plans that deliver different performance. 

Our pattern morphing theory is general, and captures the application semantics in form of operations to be performed based on the matched subgraphs. This not only enables simple counting based applications like Motif Counting, but also supports complex aggregation types like MNI support computations~\cite{fsm-mni} required in Frequent Subgraph Mining.

Since any given set of patterns can be morphed into different but equivalent alternative patterns, selecting the right alternative pattern sets is crucial to deliver high performance. This is addressed by developed a \emph{pattern morphing optimizer} that minimizes the cost of pattern sets to construct the best alternative pattern sets, where the cost of a given pattern set is based on system-specific nuances (e.g., exploration plans), application-specific operations (e.g., counting, computing support, etc.), and details of the data graph. 

\paragraph{Results.}
To demonstrate the generality and effectiveness of graph morphing, we integrated the pattern morphing engine with Peregrine~\cite{peregrine}~\footnote{Peregrine source code available here: \url{https://github.com/pdclab/peregrine}}, a recent state-of-the-art graph mining system. Our evaluation on different graph datasets and graph mining applications shows that pattern morphing accelerates the graph mining applications by up to 11.79$\times$, which is a significant benefit since it is on top of Peregrine's pattern-aware runtime.

\paragraph{Other Applications of Pattern Morphing.}
Graph morphing enables structure-aware algebra over patterns, and hence it is a generic technique to leverage structural relationships between different patterns. While in this paper we use graph morphing to accelerate general-purpose graph mining, graph morphing can be used for various other applications that inherently operate on graph sub-structures like incremental mining, approximate graph computations, mining on graph streams, interactive graph exploration, graph compression, and several others. The theory developed in this paper can be further enhanced with details like the application setup, execution environment, etc. to leverage it across those other applications.

\paragraph{Evolution of Pattern Morphing Project.}
With pattern-awareness deeply embedded in Peregrine, we had implemented custom pattern transformation rules designed specifically for motif counting since its inception (i.e., it has been publicly available since the first commit \texttt{d1d3df9} on April 28, 2020).
This paper generalizes those transformations as the pattern morphing technique to develop its well-defined theory and enable its wider applicability to various graph mining applications beyond just motif counting.

\section{Preliminaries}
We build on the graph mining terminology from \cite{peregrine}.

\paragraph{Graph Isomorphism.}
Given a graph $g$, $V(g)$ and $E(g)$ denote its set of vertices and edges. If $g$ is a labeled graph, $L(g)$ denotes its set of labels.

An \emph{isomorphism} between graphs $g$ and $h$ is a bijective mapping $f: V(g) \to V(h)$ that preserves edge relationships: if $(u, v) \in E(g)$, then $(f(u), f(v)) \in E(h)$. If there is an isomorphism between $G$ and $H$ we say they are \emph{isomorphic}. This is an equivalence relation.

A \emph{subgraph isomorphism} from a graph $h$ to a graph $g$ is an injective mapping $f: V(h) \to V(g)$ that preserves edge relationships: if $(u, v) \in E(h)$, then $(f(u), f(v)) \in E(g)$. We will re-define subgraph isomorphism in terms of graph patterns below.

\paragraph{Pattern Semantics.}
Graph mining applications involve finding subgraphs of interest in a given input graph (i.e., data graph). These subgraphs of interests are represented using \emph{patterns}. A \underline{pattern} is a simple connected graph, possibly with labels on its vertices. Apart from regular edges between vertices, a pattern can contain special edges called \emph{anti-edges} (originally defined in \cite{peregrine}). An \underline{anti-edge} indicates disconnections between pairs of vertices. We write $A(p)$ to denote the set of anti-edges of a pattern $p$. Note that $E(p)$ and $A(p)$ are disjoint.

We re-define subgraph isomorphism for patterns to take anti-edges into account. A subgraph isomorphism from a pattern $p$ to a graph $g$ is an injective mapping $f: V(p) \to V(g)$ that preserves edge and anti-edge relationships: if $(u, v) \in E(p)$, then $(f(u), f(v)) \in E(g)$ and if $(u, v) \in A(p)$, then $(f(u), f(v)) \not\in E(g)$. A subgraph isomorphism from a pattern into a graph is often called a \underline{match} for $p$.

We also consider subgraph isomorphisms between patterns. A subgraph isomorphism from a pattern $p$ to a pattern $q$ is an injective mapping $f: V(p) \to V(q)$ that preserves edge and anti-edge relationships. If $(u, v) \in E(p)$ then $(f(u), f(v)) \in E(q)$ and if $(u, v) \in A(p)$ then $(f(u), f(v)) \in A(q)$.

Using these definitions, anti-edges can capture how subgraphs are mapped to patterns. A \underline{vertex-induced pattern} (denoted as $p^V$) is a pattern containing anti-edges between all of its vertex-pairs that are not connected by a regular edge. Hence, vertex-induced patterns find subgraphs containing a set of vertices and all the edges incident between them. On the other hand, an \underline{edge-induced pattern} (denoted as $p^E$) is a pattern without any anti-edge. Edge-induced patterns find subgraphs containing a set of edges and their endpoints.
Note that fully connected patterns (i.e. \emph{cliques}) are simultaneously edge-induced and vertex-induced since all their vertex-pairs are adjacent.
Throughout the paper, we omit the the superscript on pattern variables when the discussion applies to both vertex-induced and edge-induced patterns.

Finally, a \underline{subpattern} of a pattern $p$ is a pattern $q$ for which there exists a subgraph isomorphism from $q$ into $p$. Conversely, a \underline{superpattern} of a pattern $p$ is a pattern $q$ such that $p$ is a subpattern of $q$.

\paragraph{Graph Mining Applications.}
There are several graph mining applications like Frequent Subgraph Mining, Motif Counting, Pattern Matching, Clique Finding, and others, that require extracting subgraphs of interest in a given data graph. These applications perform different operations over the extracted subgraphs including counting, listing (or enumerating), computing support measures, etc. A summary of these applications, along with efficient pattern programs to perform them can be found in \cite{peregrine}. Below we summarize important details for two applications that exemplify some of the differences across different applications.

\noindent
\emph{--- Motif Counting.}
This problem involves counting the occurrences of all \emph{motifs} up to a certain size  in a data graph, where motifs are connected, unlabeled graph patterns. Typically, the exploration is performed in vertex-induced manner, expressed using vertex-induced patterns. Since the aggregation operation over matches is a simple \emph{sum} operation over $n$-bit integral values, it requires constant amount of work for each match.

\noindent
\emph{--- Frequent Subgraph Mining (FSM).}
This problem involves listing all labeled patterns with $k$ edges that are frequent in a data graph. Typically, the exploration is performed in edge-induced manner, which gets expressed using edge-induced patterns. The frequency of a pattern (also called support) is often measured using anti-monotonic approaches like the \emph{minimum node image (MNI)}~\cite{fsm-mni} support measure, which ensure that if the support of a pattern is $s$, then the support of its subpatterns is at least $s$.
This property allows skipping exploration of patterns if their subpatterns are not frequent.

The MNI support measure consists of a table with a column for each group of symmetric vertices in $p$. For each $v \in V(p)$, the column corresponding to $v$ contains $m(v)$ from every match $m$. The support is defined to be the size of the smallest column in this table. Hence, to compute this support, the aggregation operation is to join tables by concatenating their respective columns. Each column contains $O(|V|)$ vertices which are merged during aggregation, so while each match incurs a constant amount of work to append to the table, combining the tables can take $O(|V|)$ work.

\begin{figure}[t]
\centering
   \begin{minipage}{0.49\linewidth}
  \subfloat[Edge-induced 4-cycles containing 4-clique, vertex-induced chordal 4-cycle, or vertex-induced 4-cycle. \label{fig-fourcyclemorphing}]{
    \includegraphics[width=1\textwidth]{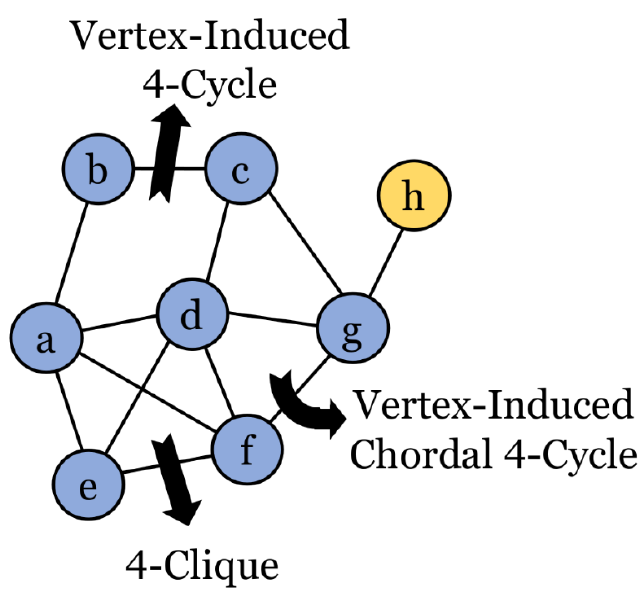}
  }
  \end{minipage}
  \hfill
  \begin{minipage}{0.5\linewidth}
  \subfloat[4-clique containing three unique edge-induced 4-cycles. \newline \label{fig-cliquetofourcyclemorphing} ]{
    \includegraphics[width=1\textwidth]{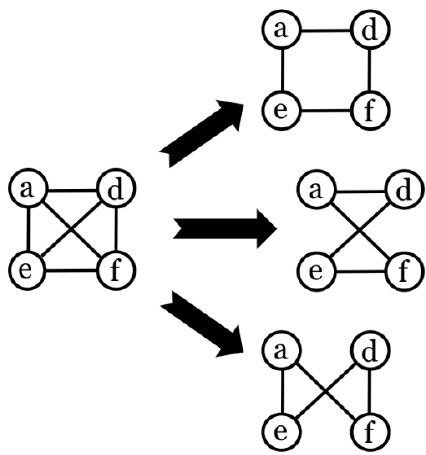}
  }
  \end{minipage}
  \caption{Identifying matches for different patterns.}
  \label{fig-unrollcollab}
\end{figure}

\section{Pattern Morphing}
\label{sec-patternmorphing}
Pattern morphing is a novel technique that enables structure-aware algebra over patterns to \emph{morph} (or convert) them into alternative patterns so that matches for the original patterns get expressed in terms of matches for other patterns. 

We first describe the main idea behind pattern morphing using illustrative examples, and then formalize the semantics of pattern morphing. 

\subsection{Intuition \& Main Idea}
Pattern morphing primarily exploits the structural similarities across different patterns. The intuition behind pattern morphing can be summarized using two key observations.

\begin{description}[leftmargin=6pt,font=\normalfont\it]
\item[---] \emph{A match for an edge-induced pattern $p^E$ on $n$ vertices is also a match for all of its subpatterns on $n$ vertices.} For example, a match for a 4-clique is also a match for an edge-induced 4-cycle, since the 4-clique contains all the edges of the 4-cycle. Note that this does not apply to vertex-induced subpatterns: even though the vertex-induced 4-cycle contains the same 4 edges, the additional two anti-edges exclude matches for 4-cliques.

\item[---] \emph{A match for a vertex-induced pattern $p^V$ is always a match for the corresponding edge-induced pattern $p^E$, since $p^V$ matches exactly the edges in $p^E$.}
\end{description}

These observations mean that we can logically partition the matches for an edge-induced pattern into disjoint sets of matches for vertex-induced patterns. For example, consider a match for an edge-induced 4-cycle. The vertices in this match may have edges between them in the data graph which are not present in the pattern. If they do not, they form a match for the vertex-induced 4-cycle (match a-b-c-d in \rfig{fig-fourcyclemorphing}). If they have exactly one extra edge, they form a match for the vertex-induced chordal 4-cycle (match d-c-g-f in \rfig{fig-fourcyclemorphing}). And finally, if they have two extra edges, they form a match for the 4-clique (match a-d-f-e in \rfig{fig-fourcyclemorphing}). Note that these situations are mutually exclusive since they depend on specific edges being present or absent.

While the above partitioning essentially allows converting the matches, a match for a given pattern can potentially contain multiple matches for another pattern. For example, a match for 4-clique contains 3 unique matches for edge-induced 4-cycle, as shown in \rfig{fig-cliquetofourcyclemorphing}. Hence, in order to convert a match for a 4-clique into a match for a 4-cycle, we must correctly map its vertices, using the subgraph isomorphisms of the 4-cycle into the 4-clique. It is crucial to note that these subgraph isomorphisms are needed to be performed only across patterns before matching starts (i.e., once only), and not across the individual matches.

\begin{figure}
  \includegraphics[width=0.49\textwidth]{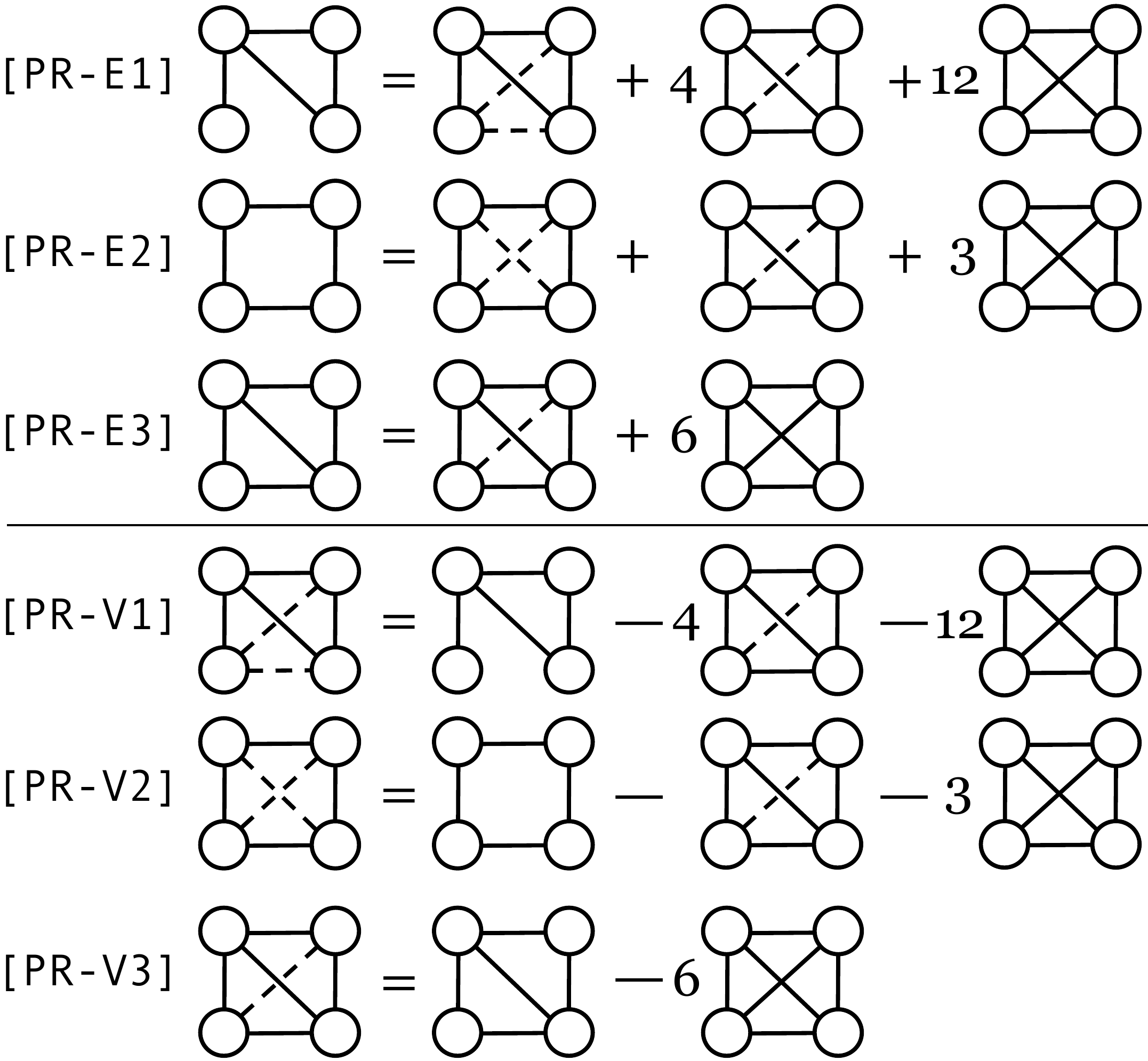}
  \caption{Sample equations resulting from pattern morphing. [PR-E1] - [PR-E3] morph edge-induced patterns (on left-hand side) to vertex-induced patterns, whereas [PR-V1] - [PR-V3] morph vertex-induced patterns to a mix of vertex-induced and edge-induced patterns. The coefficients besides the patterns indicate the number of unique matches resulting from subgraph isomorphisms. Note that patterns in the right-hand side of these equations can be substituted to generate different equations for any given pattern. The optimal equation (i.e., one that delivers best performance) for any given pattern depends on various factors including: details of data graph and pattern structures, application-specific operations, and system-specific nuances.}
  \label{fig-patternformulae}
\end{figure}

It is such kind of structure-based match conversions that pattern morphing enables.
Pattern morphing performs structure-aware algebra over patterns (and hence, their matches) to capture all the matches for a given pattern using matches of different patterns in any data graph. \rfig{fig-patternformulae} shows a few examples of how patterns can be morphed to any given pattern. The coefficients besides the patterns indicate the number of unique matches resulting from subgraph isomorphisms. Hence, for our example of morphing 4-clique to 4-cycle (discussed above), the 4-clique in Equation [PR-E2] has a coefficient 3. On the other hand, there is only a single unique subgraph isomorphism from edge-induced 4-cycle to vertex-induced 4-cycle and chordal 4-cycle, which is indicated in the same equation by no coefficients for those patterns.

Since pattern morphing mainly relies on structural similarities, the injective nature of subgraph isomorphism allows us to go the other direction as well (i.e., from edge-induced match to vertex-induced match) by simply manipulating the equations to put the desired pattern on the left-hand side. This can be seen in Equations [PR-V1] to [PR-V3] in \rfig{fig-patternformulae}.

\begin{figure}
  \includegraphics[width=0.49\textwidth]{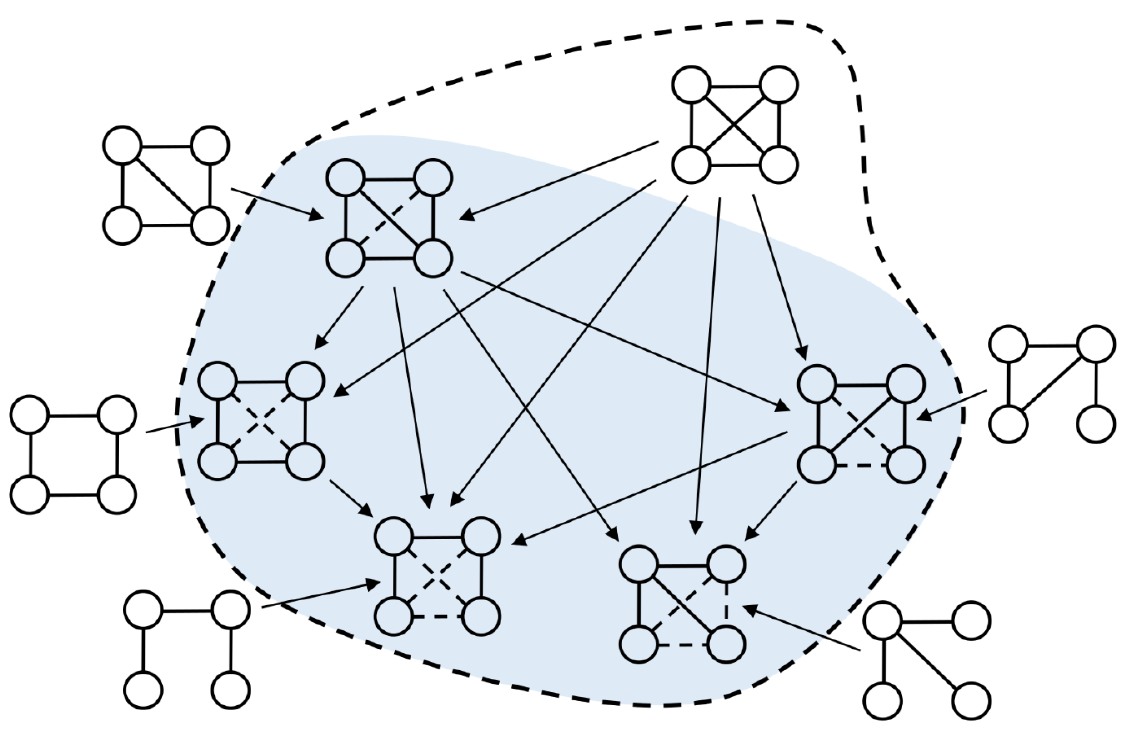}
  \caption{Motif counting example with pattern morphing. The patterns inside the dashed boundary are query patterns for motif counting. With pattern morphing, the patterns outside the shaded regions are matched, and their results are transformed to directly generate results for patterns inside the shaded region.}
  \label{fig-motifpatternplan}
\end{figure}

\paragraph{Motif Counting Example.}
The equations in \rfig{fig-patternformulae} can be directly used for 4-motif
counting: the drawing of each pattern stands for the number of matches for it,
and they get multiplied by the coefficients to capture the number of subgraph
isomorphisms across patterns. Hence, the equations can be algebraically
manipulated in order to express any set of patterns in terms of another. 
The resulting outcome of pattern morphing for 4-motif counting is shown in
\rfig{fig-motifpatternplan}: the final counts for all the 4-sized motifs can be 
directly computed using the counts for the 4-clique as well as the edge-induced 
variants of the other patterns.

\paragraph{Discussion.}
While pattern morphing can theoretically be applied across patterns of different sizes, it is often unnecessary to convert results of larger patterns to smaller patterns since it is usually expensive to match larger sized patterns instead of smaller sized patterns. Hence, we focus on morphing patterns to alternative patterns that have the same number of vertices as the patterns being morphed. 

The pattern morphing strategy discussed so far is purely based on the structural similarities across patterns. Hence, the details regarding morphing across pattern pairs can be generated once only, and reused across different data graphs and applications. In \rsec{sec-automaticpatternmorphingimpl}, we will discuss how the structure of the data graph and the complexity of the aggregation type (for different applications) can be used to select the specific morphings for a given execution.

Finally, pattern morphing can be applied to any pattern and any aggregation type; while we only showed the motif counting use case above, complex aggregation types like MNI tables~\cite{fsm-mni} (required in Frequent Subgraph Mining), or simple enumeration can also be performed with pattern morphing. As we will see next, the semantics of pattern morphing will be formalized by abstracting out details like aggregation types.

\subsection{Semantics of Pattern Morphing}
We now formalize the semantics of constructing the correct set of alternative patterns for any given (edge-induced or vertex-induced) pattern.

\subsubsection{Definitions \& Notations} 
\label{sec-definitionsnotations}
We define the key terms using which we build our analysis. 

\noindent
\emph{\underline{Subgraph Isomorphisms:}} 
Throughout this section, we will treat subgraph isomorphisms as functions between sequences of numbers, each representing a vertex. This means, for graphs $p$ and $q$ we view a subgraph isomorphism from $p$ into $q$ as a function $f : \{1,\dots,|p|\} \to \{1,\dots,|q|\}$. Hence, when $|p| = |q|$, a subgraph isomorphism is simply a permutation. For example, there are three subgraph isomorphisms from an edge-induced 4-cycle to a 4-clique, as shown in \rfig{fig-definitions}. 

Given two patterns $p$ and $q$, we define \bm{$\phi(p, q)$} to be the set of all subgraph isomorphisms from $p$ into $q$. \rfig{fig-definitions} shows two examples: $\phi(p_1^E, p_3^V)$ which contains four subgraph isomorphisms from edge-induced tailed triangle ($p_1^E$) to vertex-induced chordal 4-cycle ($p_3^V$); and, $\phi(p_2^E, p_4)$ containing the three subgraph isomorphisms from edge-induced 4-cycle ($p_2^E$) to 4-clique ($p_4$). 

\noindent
\emph{\underline{Match Sets:}}
Given a pattern $p$ (either edge-induced or vertex-induced) and a data graph $G$, $M(p, G)$ denotes the set of all matches for $p$ in data graph $G$. Each individual match in $M(p, G)$ is simply a function $m : \{1,\dots,|p|\} \to \{1,\dots,|G|\}$. Since our analysis will mainly focus on morphing patterns, which is inherently independent of any data graph, we simply use \bm{$M(p)$} as a shorthand notation (i.e., we drop $G$). 

\noindent
\emph{\underline{Permuting Matches based on Subgraph Isomorphisms:}}
Given a data graph $G$ and two patterns $p$ and $q$ with $n$ vertices, we write \bm{$M(q) \circ \phi(p, q)$} to mean the permutations of the vertices in each match $m \in M(q)$ according to the subgraph isomorphisms from $p$ into $q$. Note that the permuted matches are nothing but matches of $p$ in $G$. Specifically, since a match for $q$ in $G$ is an injective function $m : \{1,\dots,n\} \to \{1,\dots,|G|\}$, and $\phi(p, q)$ consists of bijective functions $f : \{1,\dots,n\} \to \{1,\dots,n\}$:
  \[M(q) \circ \phi(p, q) = \{m \circ f : m \in M(q) \land f \in \phi(p, q)\}\]
where $m \circ f$ means permuting the domain of $m$ according to $f$.

\begin{figure}[t]
\includegraphics[width=0.49\textwidth]{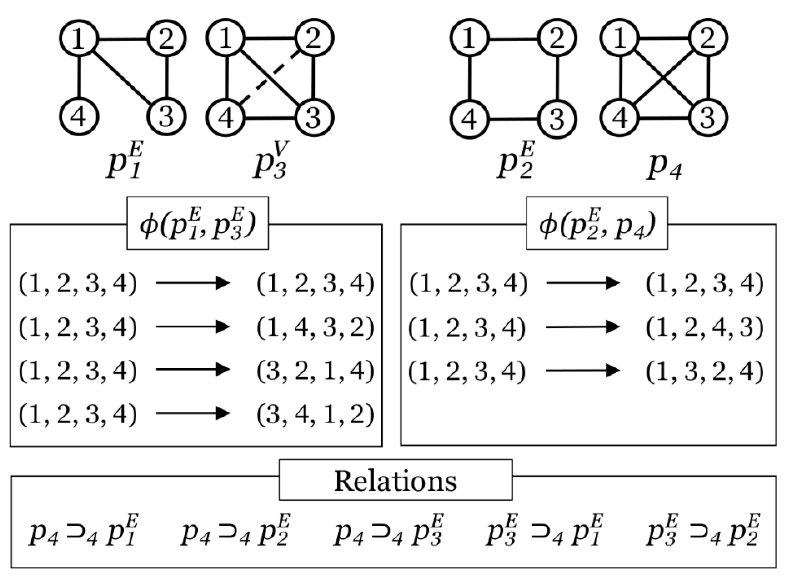}
\caption{Examples of subgraph isomorphisms across patterns, and non-isomorphic superpatterns.}
\label{fig-definitions}
\end{figure}

For example, consider the matches for 4-clique (pattern $p_4$ in \rfig{fig-definitions}) in the data graph from \rfig{fig-fourcyclemorphing}. Here, $M(p_4) = \{\{1\to a, 2 \to d, 3 \to f, 4 \to e\}\}$ and hence:
\begin{align*}
M(p_4) \circ \phi(p_2^E, p_4) = 
\begin{cases}
\begin{rcases}
\{1\to a, 2 \to d, 3 \to f, 4 \to e\}, \\ 
\{1\to a, 2 \to d, 4 \to f, 3 \to e\}, \\ 
\{1\to a, 3 \to d, 2 \to f, 4 \to e\}
\end{rcases}
\end{cases}
\end{align*}
Each of $m \circ f \in M(p_4) \circ \phi(p_2^E, p_4)$ is a match for $p_2^E$ (i.e., edge-induced 4-cycle) in subgraph of $G$ containing the four vertices $a$, $d$, $f$ and $e$.

\noindent
\emph{\underline{Non-Isomorphic Superpatterns:}}
Finally, given patterns $p$ and $q$, we write \bm{$q \supset_n p$} if $q$ is a non-isomorphic superpattern of $p$ containing at most $n$ vertices. In \rfig{fig-definitions}, 4-clique is a non-isomorphic superpattern of the remaining three patterns, while vertex-induced chordal 4-cycle is a non-isomorphic superpattern of edge-induced 4-cycle and edge-induced tailed triangle. 

\subsubsection{Transforming Matches for Pattern Morphing}
The key idea behind pattern morphing is to transform the matches of a given pattern into matches for other patterns. This is captured by the \emph{Match Conversion Theorem}. 

\begin{thm}[\textbf{Match Conversion Theorem}]
\label{thm-conversion}
  Let $p^E$ be an edge-induced pattern with $n$ vertices, and $p^V$ be its vertex-induced variant. Then,
  \[M(p^E) \ = \ M(p^V) \ \ \cup \bigcup_{q^E \supset_n p^E} M(q^V)\circ \phi(p^E, q^E)\]
\end{thm}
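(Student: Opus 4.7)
The plan is to prove the set equality by establishing both inclusions, with most of the work going into the $\subseteq$ direction, where a structural case analysis decomposes each edge-induced match according to the extra edges present among its image vertices.

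For the $\supseteq$ direction, I would verify the two pieces of the union separately. First, $M(p^V) \subseteq M(p^E)$ because $p^V$ and $p^E$ carry the same regular edges, so any injection respecting the edges and anti-edges of $p^V$ in particular respects the edges of $p^E$. Second, for every $q^E \supset_n p^E$, every $m \in M(q^V)$, and every $f \in \phi(p^E, q^E)$, I check that $m \circ f \in M(p^E)$: for any $(u,v) \in E(p^E)$, edge-preservation of $f$ gives $(f(u), f(v)) \in E(q^E) = E(q^V)$, and then edge-preservation of $m$ gives $(m(f(u)), m(f(v))) \in E(G)$; injectivity of $m \circ f$ follows from injectivity of $m$ and $f$. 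This is essentially a direct unfolding of definitions.

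For the $\subseteq$ direction, I take an arbitrary $m \in M(p^E)$ and examine the subgraph $H$ of $G$ induced on the image $m(V(p^E))$. Since $m$ is an edge-induced match, $H$ already contains (the image of) every edge of $p^E$, but may contain extra edges among those same $n$ vertices. If there are no extras, then $m$ also satisfies every anti-edge of $p^V$, so $m \in M(p^V)$. Otherwise, let $q^V$ be the vertex-induced pattern isomorphic to $H$, and let $q^E$ be its edge-induced variant; then $q^E$ is a proper non-isomorphic superpattern of $p^E$ on $n$ vertices, so $q^E \supset_n p^E$. Pick any isomorphism $m' : V(q^V) \to V(H)$, which is by construction a match in $M(q^V)$, and set $f = m'^{-1} \circ m$. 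Edge-preservation of $m$ together with edge-preservation of $m'^{-1}$ shows that $f$ sends $E(p^E)$ into $E(q^E)$, and injectivity of $f$ is inherited from $m$; hence $f \in \phi(p^E, q^E)$ and $m = m' \circ f$ lies in $M(q^V) \circ \phi(p^E, q^E)$.

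The main obstacle I anticipate is the bookkeeping around the non-isomorphic representatives in the index set $\{q^E : q^E \supset_n p^E\}$: in the $\subseteq$ argument, the pattern attached to $H$ is determined only up to isomorphism, so I must argue that a different vertex labeling on $q^V$, or equivalently a different choice of isomorphism $m' : V(q^V) \to V(H)$, simply reroutes $m$ through a different $f \in \phi(p^E, q^E)$ while keeping membership intact. A smaller subtlety is that for labeled patterns the same edge-preservation reasoning carries through once label compatibility is folded into the definition of subgraph isomorphism, so the result extends without modification to the labeled setting used by Frequent Subgraph Mining.
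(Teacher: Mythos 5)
Your proof is correct and follows essentially the same route as the paper: both inclusions are verified, with the $\subseteq$ direction handled by a case analysis on the pattern induced by the image of the match, and the $\supseteq$ direction by unfolding the definitions of matches and subgraph isomorphisms. If anything, your explicit construction $f = m'^{-1}\circ m$ from a chosen isomorphism $m'$ of $q^V$ onto the induced subgraph is a slightly more careful rendering of the paper's Case~2, which asserts $m\circ f^{-1}\in M(q^V)$ for an \emph{arbitrary} $f\in\phi(p^E,q^E)$ even though only the existence of one suitable $f$ is needed (and the claim can fail for some choices of $f$).
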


\begin{proof}
  Since we are proving set equality, we will first show how every match in $M(p^E)$ is
  contained in the set on the right-hand side of the equation, and then show that 
  $M(p^E)$ contains every match from the right-hand side.

  Let $m$ be a match in $M(p^E)$, and $q^V$ be the pattern of the subgraph induced 
  by the image of $m$. $q^V$ must have at least as many edges as $p^E$, since it was
  induced by a match for $p^E$. 
  
  \noindent \emph{-- Case 1: \ } 
  If $q^V$ has the same number of edges as $p^E$, 
  $q^V$ is isomorphic to $p^V$ and hence $m \in M(p^V)$.
  
  \noindent \emph{-- Case 2: \ } 
  Otherwise, $q^V$ has more edges than $p^E$. Consider the edge-induced pattern
  $q^E$ corresponding to $q^V$. $q^E$ must be a superpattern of $p^E$, since
  $q^V$ has more edges than $p^E$ but contains all the edges of $p^E$. This
  means $\phi(p^E, q^E)$ is non-empty. For any $f \in \phi(p^E, q^E)$, observe
  that $m\circ f^{-1} : V(q^E) \to V(p^E) \to G$ is a match for $q^V$, since
  $V(q^E) = V(q^V)$ and $q^E$ was obtained from the induced pattern $q^V$. 
   This means $m\circ f^{-1} \in M(q^V)$, and hence 
   $m \circ f^{-1} \circ f = m \in M(q^V)\circ \phi(p^E, q^E)$. 

   Therefore, we showed $M(p^E)$ is a subset of the right-hand side of the equation.
   Next, we will prove that the right-hand side is contained within $M(p^E)$.
   
  First, note that a match for $p^V$ is trivially a match for $p^E$, since $p^E$
  and $p^V$ differ only in anti-edges, so $M(p^V) \subseteq M(p^E)$.

  Next, take any match $m \in M(q^V)$ where $q^E \supset_n p^E$. $m$ is also in
  $M(q^E)$ by the same reasoning as above. But then for any $f \in \phi(p^E, q^E)$,
  $m \circ f : V(p^E) \to V(q^E) \to G$ is a match for $p^E$ since any match for $q^E$ 
  contains all edges required for matching $p^E$. 
  Hence, $M(q^V) \circ \phi(p^E, q^E) \subseteq M(p^E)$.

  Taking the union of $M(p^V)$ and $M(q^V) \circ \phi(p^E, q^E)$ for each
  $q^E \supset_n p^E$ gives the set of matches that are contained in $M(p^E)$.
\end{proof}

The Match Conversion Theorem reflects a useful relationship between edge-induced and
vertex-induced patterns. In fact, although the theorem is stated in terms of morphing from vertex-induced to edge-induced, we can move in the other direction as well:

\begin{cor}
  \label{cor-vinduced}
  Let $p^E$ be an edge-induced pattern with $n$ vertices, and $p^V$ be its vertex-induced variant. Then,
  \[M(p^V) = M(p^E) \setminus \bigcup_{q^E \supset_n p^E} M(q^V)\circ \phi(p^E, q^E)\]
\end{cor}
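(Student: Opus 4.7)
The plan is to derive the corollary directly from the Match Conversion Theorem by recognizing that the union on its right-hand side is in fact disjoint, and hence the set-difference version follows by subtracting one term from both sides. The Match Conversion Theorem gives us
\[M(p^E) \;=\; M(p^V) \;\cup\; \bigcup_{q^E \supset_n p^E} M(q^V)\circ \phi(p^E, q^E),\]
so the whole corollary will follow once we establish
\[M(p^V) \,\cap\, \bigcup_{q^E \supset_n p^E} M(q^V)\circ \phi(p^E, q^E) \;=\; \emptyset.\]

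To prove this disjointness, I would reason in terms of the subgraph of $G$ induced by the image of a match. For any $m \in M(p^V)$, the anti-edges of $p^V$ force the induced subgraph to have exactly the edges of $p^E$ and no more. For any element $m' \in M(q^V)\circ \phi(p^E, q^E)$, write $m' = m_q \circ f$ with $m_q \in M(q^V)$ and $f \in \phi(p^E, q^E)$. Since $p^E$ and $q^E$ both have exactly $n$ vertices (the notation $q^E \supset_n p^E$ combined with $|V(p^E)|=n$ forces equality), the injection $f: V(p^E)\to V(q^E)$ is in fact a bijection, so the image of $m'$ coincides with the image of $m_q$; consequently the subgraph of $G$ induced by $m'$ is isomorphic to $q^V$. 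Because $q^E$ is a \emph{strict} non-isomorphic superpattern of $p^E$, the induced graph has strictly more edges than $p^E$, and therefore $m'$ cannot belong to $M(p^V)$. This establishes disjointness.

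With disjointness in hand, the corollary is immediate: subtracting $\bigcup_{q^E \supset_n p^E} M(q^V)\circ \phi(p^E, q^E)$ from both sides of the Match Conversion Theorem's identity yields exactly the claimed expression for $M(p^V)$. I would finish by briefly noting that this set-difference form is the ``opposite direction'' of morphing, letting one recover vertex-induced matches from the cheaper edge-induced enumeration by subtracting off the overcounted superpattern matches.

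The main obstacle I anticipate is the bijection/edge-counting step: one must be careful to argue that composing with $f$ does not alter the image (requiring $|V(p^E)|=|V(q^E)|=n$, hence $f$ bijective), and that the induced subgraph strictly exceeds $p^E$ in edge count (requiring $q^E$ to be a non-isomorphic, and therefore strict, superpattern rather than merely a superpattern). Once these two observations are in place, the rest of the proof is a purely set-theoretic manipulation of the theorem's identity.
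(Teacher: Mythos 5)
Your proof is correct, and it follows the same overall route as the paper: invoke \rthm{thm-conversion}, show that $M(p^V)$ is disjoint from every $M(q^V)\circ\phi(p^E,q^E)$ with $q^E \supset_n p^E$, and then subtract the union from both sides. The only substantive difference is how disjointness is established. The paper argues locally and by contradiction: since $q^E$ is a proper superpattern, any $f \in \phi(p^E,q^E)$ admits an edge $(f(u),f(v)) \in E(q^E)$ with $(u,v) \notin E(p^E)$; that pair is an anti-edge of $p^V$ but $(m'(f(u)), m'(f(v)))$ is forced to be an edge of $G$ by the match $m' \in M(q^V)$, a contradiction with a single edge. You instead argue globally by counting edges of the subgraph of $G$ induced by the match image: a match for $p^V$ induces exactly $|E(p^E)|$ edges, while an element of $M(q^V)\circ\phi(p^E,q^E)$ has the same image as a match for $q^V$ (since $f$ is a bijection, which you correctly justify from $q^E \supset_n p^E$ and $|V(p^E)| = n$) and hence induces $|E(q^E)| > |E(p^E)|$ edges, where the strict inequality follows because a non-isomorphic superpattern on the same vertex set must have strictly more edges. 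Both of the delicate points you flag do hold, so the argument is sound; the paper's version is slightly more elementary (no counting, no bijectivity needed beyond what the contradiction uses), while yours makes the ``overcounted superpattern matches'' intuition explicit and dispenses with the paper's separate remark about the clique case, which in your formulation is vacuous because the union is empty.
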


\begin{proof}
  Let the set $\bigcup_{q^E \supset_n p^E} M(q^V)\circ \phi(p^E, q^E)$ be denoted as $B$.
  From \rthm{thm-conversion}, we have $M(p^E) = M(p^V) \cup B$. Notice that if $M(p^V)$
  is disjoint from all $M(q^V)\circ \phi(p^E, q^E)$ where $q^E \supset_n p^E$, then we 
  could take the set difference on both sides of the equation from \rthm{thm-conversion} 
  with $B$ to prove the corollary, simply because no element of $M(p^V)$ would be 
  removed by the set difference operation.

  It remains to prove that $M(p^V)$ is disjoint from the various $M(q^V) \circ
  \phi(p^E, q^E)$ where $q^E \supset_n p^E$. We prove this by contradiction. 

  Let $m$ be a match in $M(p^V) \cap M(q^V) \circ \phi(p^E, q^E)$ for any $p^V$ and 
  some $q^V$ where $q^E \supset_n p^E$.   
  First note that if $p^V$ is a clique, there is no $q^E \supset_n p^E$, and $m$
  cannot exist, so we are done. Instead, suppose $p^V$ is not a clique, and thus
  has at least one anti-edge.
  Since $q^E \supset_n p^E$, for each $f \in \phi(p^E, q^E)$ there is an edge
  $(f(u), f(v))\in E(q^E)$ such that $(u, v)\not\in E(p^E)$. This means $(u, v)$
  forms an anti-edge constraint in $p^V$, and $(f(u), f(v))$ forms an edge
  constraint in $q^V$.
  As $m \in M(q^V)\circ \phi(p^E, q^E)$, it can be written in the form $m = m'
  \circ f$ where $m' \in M(q^V)$ and $f \in \phi(p^E, q^E)$. But then $((m'\circ
  f)(u), (m'\circ f)(v))$ must be an edge in $G$, since $(f(u), f(v)) \in
  E(q^V)$ and $m'$ is a match for $q^V$. 
  This directly contradicts the anti-edge constraint in $p^V$ that we established 
  above. 

  Hence $M(p^V) \cap M(q^V)\circ\phi(p^E, q^E) = \emptyset$ for all $q^E
  \supset_n p^E$, as desired.
\end{proof}

By recursively substituting in the equation in \rcor{cor-vinduced}, we
can express the matches of a vertex-induced pattern in terms of the matches for
edge-induced patterns. Since the final superpattern of any non-clique pattern is
a clique, this recursive substitution terminates.

\paragraph{Discussion.}
Using \rthm{thm-conversion}, we can easily materialize edge-induced matches from vertex-induced matches. Similarly, we can convert edge-induced matches to vertex-induced ones using \rcor{cor-vinduced}; however, materializing these vertex-induced matches would require first materializing $M(p^E)$ for computing the set difference, which may not always be feasible. Nevertheless, the ability to accurately convert edge-induced matches to vertex-induced ones (and vice-versa) enables directly computing results for any given pattern from its alternative pattern sets, as shown next.

\subsubsection{Aggregation with Pattern Morphing}
Graph mining applications often compute aggregated statistics (e.g., counts, support, etc.) based on the matches that get explored from the data graph. We capture how Pattern Morphing naturally enables efficient computation of such aggregated statistics. 

Let $a = (\lambda, \oplus)$ be the aggregation in graph mining applications where $\lambda$ is a map from matches to an aggregation value and $\oplus$ is a commutative operator for combining aggregation values. For a set of matches $M(p)$, we write $a(M(p))$ as a shorthand for $\bigoplus\limits_{m \in M(p)} \lambda(m)$. 

Note that our definition of aggregation captures functions whose images can be summed. The simplest example is counting, where $\lambda(M) = |M|$ for a set of matches $M$ and the $\oplus$ operator is the traditional integer sum. For more complicated applications like FSM, $\lambda$ computes the MNI table of a set of matches and the $\oplus$ operator joins tables on columns. In addition, we define a \emph{permute} operator $\circ_{*}$ for aggregation values to account for the permutations according to $\phi(q, p)$ (similar to $\circ$ defined on matches in \rsec{sec-definitionsnotations}). In the counting example, for a match $m$ and a permutation $f$, $\circ_*$ can be defined as $a(m) \circ_{*} f = a(m)$. In FSM, on the other hand, $a(m) \circ_{*} f$ permutes the columns of the MNI table of a match $m$ according to permutation $f$.

Using this abstraction, we prove the following critical theorem that enables directly computing aggregation values using matches for morphed patterns.

\begin{thm}[\textbf{Aggregation Conversion Theorem}]
  \label{thm-agg}
  Let $a = (\lambda, \oplus)$ be the aggregation in graph mining applications. For any match $m$, permutation $f$ on $n$ vertices, and permute operator $\circ_{*}$, if \ $a(m\circ f) = a(m) \circ_{*} f$, then:
  \[a(M(p^E)) = a(M(p^V)) \oplus \left( \bigoplus_{q^E \supset_n p^E} \ \bigoplus_{f \in \phi(p^E, q^E)} \hspace{-0.1in} a(M(q^V))\circ_{*} f \right) \]
\end{thm}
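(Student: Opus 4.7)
The plan is to reduce the Aggregation Conversion Theorem directly to the Match Conversion Theorem (\rthm{thm-conversion}) by applying the aggregation $a$ to both sides of the set equation it provides, and then carefully factoring the aggregation over the outer and inner unions. The hypothesis $a(m\circ f) = a(m)\circ_* f$ is exactly what permits the $f$-permutations to be pulled out at the end.

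First, I would restate the identity from \rthm{thm-conversion}:
\[
  M(p^E) \;=\; M(p^V) \;\cup\; \bigcup_{q^E \supset_n p^E} M(q^V)\circ \phi(p^E, q^E),
\]
and argue that this union is in fact \emph{disjoint}. The disjointness between $M(p^V)$ and each $M(q^V)\circ\phi(p^E,q^E)$ is already established by \rcor{cor-vinduced} (its anti-edge argument). The remaining case — disjointness between $M(q_1^V)\circ\phi(p^E, q_1^E)$ and $M(q_2^V)\circ\phi(p^E, q_2^E)$ for non-isomorphic $q_1, q_2$ — follows from the same observation used in the forward direction of \rthm{thm-conversion}: a match in $M(q^V)\circ\phi(p^E,q^E)$, when composed with $f^{-1}$ for any $f\in\phi(p^E,q^E)$, recovers a match of $q^V$, so its image in $G$ induces a subgraph isomorphic to $q^V$. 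Since $q_1^V \not\cong q_2^V$, the two sets cannot share a match.

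Next, since $a(\cdot)$ is defined as $\bigoplus_m \lambda(m)$ over a set of matches and $\oplus$ is commutative, it distributes over disjoint unions. Applying $a$ yields
\[
  a(M(p^E)) \;=\; a(M(p^V)) \;\oplus\; \bigoplus_{q^E \supset_n p^E} a\bigl(M(q^V)\circ \phi(p^E, q^E)\bigr).
\]
For each inner term I would further split $M(q^V)\circ \phi(p^E, q^E) = \bigcup_{f \in \phi(p^E,q^E)} (M(q^V)\circ f)$; the sets $M(q^V)\circ f$ for distinct $f$ are disjoint because $m$ is injective, so $m\circ f_1 = m\circ f_2$ would force $f_1 = f_2$ (and any collision $m_1\circ f_1 = m_2\circ f_2$ across different $m_i$ would, after restricting to the domain, force $f_1=f_2$ and hence $m_1=m_2$). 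Distributing $a$ over this disjoint union and then applying the hypothesis $\lambda(m\circ f) = \lambda(m)\circ_* f$ together with the assumed compatibility of $\circ_*$ with $\oplus$ gives $a(M(q^V)\circ f) = a(M(q^V)) \circ_* f$. Substituting back yields exactly the equation in the theorem statement.

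The main obstacle I anticipate is the disjointness bookkeeping in the two union-splittings above; the inner splitting over $f \in \phi(p^E, q^E)$ is the most delicate, since one has to rule out accidental coincidences $m_1\circ f_1 = m_2\circ f_2$ with $f_1\neq f_2$, which requires leaning on injectivity of matches and the fact that $f_1, f_2$ are bijections on $\{1,\dots,n\}$. Once disjointness is secured, the rest is a mechanical distribution of $a$ and a direct application of the commuting hypothesis.
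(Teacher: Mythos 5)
Your proposal is correct and follows essentially the same route as the paper: apply the aggregation $a$ to the set identity of \rthm{thm-conversion}, distribute $\oplus$ over the outer and inner unions, and use the hypothesis $a(m\circ f)=a(m)\circ_* f$ to pull the permutations out, which is exactly the paper's three-line computation. The only difference is that you explicitly supply the disjointness bookkeeping (via the anti-edge argument of \rcor{cor-vinduced}, non-isomorphism of distinct superpatterns, and injectivity for the inner split over $f$) that the paper treats as immediate, which is a reasonable tightening rather than a different approach.
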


\begin{proof}
  The equation follows immediately from \rthm{thm-conversion}.
  \begin{align*}
    a(M(p^E)) \ &= \ a\left(M(p^V) \ \ \cup \bigcup_{q^E \supset_n p^E} M(q^V)\circ \phi(p^E, q^E)\right)\\
                &= \ a(M(p^V)) \oplus \left(\bigoplus_{q^E \supset_n p^E} a(M(q^V)\circ \phi(p^E, q^E)) \right)\\
                &= \ a(M(p^V)) \oplus \left(\bigoplus_{q^E \supset_n p^E} \ \bigoplus_{f \in \phi(p^E, q^E)} \hspace{-0.1in} a(M(q^V)) \circ_* f \right)
  \end{align*}
\end{proof}

\rthm{thm-agg} demonstrates the benefit of pattern morphing as a significant amount of matches no longer need to be materialized to compute the aggregation values. In other words, we can obtain the results of an aggregation over a pattern by using the aggregations over a completely different set of patterns.
The performance gains achieved due to this can be captured as follows.

\begin{cor}
  \label{cor-time}
  Let $T(p)$ be the time it takes to find all matches of a pattern $p$ with
  $n$ vertices and apply aggregation  $a = (\lambda, \oplus)$ that satisfies \rthm{thm-agg}.
  Then for patterns $p^E$ and $q^E \supset_n p^E$, 
  the aggregation \ $a(M(q^V) \circ \phi(p^E, q^E))$ can be computed in time $O(T(q^V)) + O(|\phi(p^E, q^E)|)$.
\end{cor}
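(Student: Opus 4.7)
The plan is to read the cost directly off of the algebraic identity that already appears inside the proof of \rthm{thm-agg}, so this corollary amounts mostly to bookkeeping. First, I would invoke the hypothesis $a(m \circ f) = a(m) \circ_{*} f$ (the premise of \rthm{thm-agg}) to rewrite the target expression as
\[
a(M(q^V) \circ \phi(p^E, q^E)) \ = \ \bigoplus_{f \in \phi(p^E, q^E)} a(M(q^V)) \circ_{*} f,
\]
which is exactly the inner reduction that was derived from \rthm{thm-conversion} in the proof of \rthm{thm-agg}. The crucial structural observation is that the value $a(M(q^V))$ on the right-hand side is independent of $f$, so it only needs to be computed once rather than once per isomorphism.

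Second, I would split the total cost into two pieces corresponding to the two terms in the claimed bound. Computing $a(M(q^V))$ from scratch is, by the definition of $T$, exactly the cost of enumerating the matches of $q^V$ and running the aggregation, which is $T(q^V)$. With that value in hand, the outer reduction consists of exactly $|\phi(p^E, q^E)|$ applications of $\circ_{*}$ on the already-computed aggregation value, followed by at most $|\phi(p^E, q^E)| - 1$ applications of $\oplus$ to combine them, each of which is treated as a constant-cost primitive (as in the counting and MNI examples used throughout \rsec{sec-patternmorphing}). Summing contributes $O(|\phi(p^E, q^E)|)$ of additional work on top of the $T(q^V)$ already paid.

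Adding the two contributions yields the stated bound $O(T(q^V)) + O(|\phi(p^E, q^E)|)$. The main thing to be careful about — and really the only substantive point — is to emphasize that $\phi(p^E, q^E)$ depends only on the patterns and not on the data graph, so it can be precomputed offline and does not introduce any per-match dependence, and that $a(M(q^V))$ must not be recomputed inside the loop over $f$, otherwise the bound would become multiplicative rather than additive. No further work is needed, since the treatment of $\circ_{*}$ and $\oplus$ as constant-cost primitives on aggregation values is the same convention already baked into the definition of $T$.
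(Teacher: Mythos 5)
Your proposal is correct and follows essentially the same route as the paper: rewrite $a(M(q^V)\circ\phi(p^E,q^E))$ as $\bigoplus_{f\in\phi(p^E,q^E)} a(M(q^V))\circ_* f$ (the identity already established for Theorem~\ref{thm-agg}), compute $a(M(q^V))$ once in $O(T(q^V))$ time, and then charge $O(|\phi(p^E,q^E)|)$ for applying the permutations and combining. Your added remarks about not recomputing $a(M(q^V))$ inside the loop and about $\phi$ being data-graph independent are consistent elaborations, not a different argument.
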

\begin{proof} 
  By definition, we already know:
  \begin{align*}
    a(M(q^V)\circ\phi(p^E,q^E)) &= \bigoplus_{f \in \phi(p^E,q^E)} \hspace{-0.1in} a(M(q^V)) \circ_* f
  \end{align*}

  This means that we can compute $a(M(q^V)\circ \phi(p^E,q^E))$ by first
  computing $a(M(q^V))$ and then applying the permutations $f \in \phi(p^E,
  q^E)$. Hence, $O(T(q^V))$ time spent computing $g(M(q^V))$ and $O(|\phi(p^E,
  q^E)|)$ time spent adjusting the results.
\end{proof}

\rcor{cor-time} shows how pattern morphing boosts performance by reducing the number of matches we need to produce for each $q^E \supset_n p^E$ by a factor of $|\phi(p^E, q^E)|$. 

Composing \rthm{thm-agg} and \rcor{cor-vinduced} leads to the expected result that aggregations of a vertex-induced pattern can be computed using the aggregations of edge-induced patterns, with the added restriction that the aggregation function's image must also be additive. We skip the proof since it is nearly identical to that of \rcor{cor-time}.
Additionally, the alternative pattern set can contain a combination of
vertex-induced and edge-induced patterns by converting the intermediate
aggregations through recursive applications of the theorem.

\section{Implementation \& Evaluation}

\subsection{Automatic Pattern Morphing Implementation}
\label{sec-automaticpatternmorphingimpl}
The pattern morphing theory developed in \rsec{sec-patternmorphing} can be applied in practice to accelerate graph mining tasks in any general-purpose graph mining system. Since pattern morphing guides conversion of results across different patterns, it can be added as an `external module' to any graph mining system (i.e., without changing the implementation of the system's subgraph exploration strategies). Such a module would: (a) generate alternative pattern sets for the input pattern sets; and, (b) transform the aggregation/operation results after matches get explored. 

For this evaluation, we developed an automatic pattern morphing engine for Peregrine, which is the state-of-the-art pattern-aware graph mining system. Peregrine, being pattern-aware, already exploits efficient matching algorithms to generate optimized matching plans for different patterns. Adding pattern morphing as an external module enables leveraging the optimized plans across different patterns. 

There can be multiple alternative pattern sets for any given set of patterns, and those different alternative pattern sets can deliver different performance depending on the specific patterns they contain. Hence, selecting the right alternative pattern sets for a given set of patterns is crucial to deliver high performance. This can be addressed by developing a `pattern morphing optimizer' that minimizes the cost of pattern sets to construct the best alternative pattern sets. Here, the cost of a given pattern set must capture three main factors:
\begin{enumerate}[leftmargin=0.15in]
\item The nuances of exploration strategies used by the underlying system. For instance, on Peregrine this would include the work done in adjacency list intersection and difference operations, positive effects of symmetry breaking, and the custom pattern-specific optimizations it employs. On other systems that may not be fully pattern-aware (e.g., due to lack of symmetry breaking), the cost of exploring patterns would be different.
\item The details of performing application-specific operations on matches. For instance, counting patterns takes $\BigO{1}$ for groups of matches, whereas computing MNI support for FSM takes $\BigO{|V(G)|}$ time. With pattern morphing, the conversion costs for these operations must also be considered. 
\item The details of the data graph. This would not only include the structural details of the data graph like degree distribution and connectivity information, but also include application-specific properties in the data graph like label distributions (required for FSM).
\end{enumerate}
We evaluate the effectiveness of such a cost-model based pattern morphing optimizer by incorporating it with our pattern morphing engine for Peregrine.

\subsection{Experimental Setup}
\paragraph{System.}
We evaluate the pattern morphing engine in Peregrine on a Google Cloud \texttt{n2-highcpu-32} instance which consists of a 2.8GHz Intel Cascade Lake processor with 32 logical cores and 32GB of RAM.

\paragraph{Datasets.}
\rtab{table-datasets} lists the data graphs used in our evaluation. Mico (MI) is a co-authorship graph labeled with each author's research field. Patents (PA) is a patent citation graph, where each patent is labeled with the year it was granted. Youtube (YT) consists of videos crawled by~\cite{youtube} from 2007--2008, with edges between related videos. Videos are labeled according to their ratings, like in~\cite{fractal}. Orkut (OK) is an unlabeled social network graph where edges represent friendships between users. All these datasets have been used to evaluate previous systems~\cite{arabesque,fractal,automine,peregrine}.

\begin{figure}
  \label{fig-evaluation-patterns}
  \includegraphics[width=0.49\textwidth]{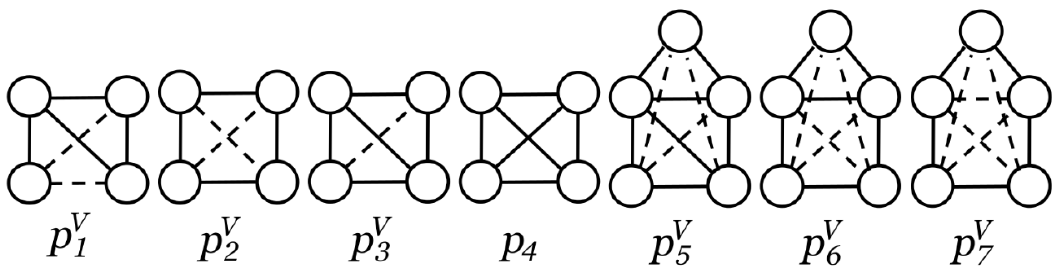}
  \caption{Patterns used in evaluation.}
  \label{fig-evaluation-patterns}
\end{figure}

\paragraph{Applications.}
We run experiments on a wide array of applications: counting motifs with 3 and 4 vertices, size 3 FSM, and matching vertex-induced versions of the patterns in \rfig{fig-evaluation-patterns}. Throughout the evaluation, we use the following notations for different variants:
\begin{itemize}[leftmargin=0.15in]
\item \textbf{\nomorph{}:} this is Peregrine without pattern morphing, where the query patterns for each of the applications are directly used to explore the data graph.
\item \textbf{\naivemorph{}:} this is Peregrine with pattern morphing, where the query patterns are morphed so that edge-induced input patterns are morphed into vertex-induced patterns, and vertex-induced input patterns are morphed into edge-induced patterns.
\item \textbf{\smartmorph{}:} this is Peregrine with the pattern-morphing optimizer that constructs the best alternative pattern sets by minimizing their costs. 
\end{itemize}

\begin{table}[t]
\small
  \begin{tabular}{l r r c r r}
    \multicolumn{1}{c}{\multirow{2}{*}{$G$}} & \multirow{2}{*}{$|V(G)|$} & \multirow{2}{*}{$|E(G)|$} & \multirow{2}{*}{$|L(G)|$} & Max. & Avg.  \\
     &  &  &  & Deg. & Deg. \\
    \midrule
(MI) Mico~\cite{mico}       & 100K    & 1M     & 29  & 1359  & 22   \\ 
(PA) Patents~\cite{patents} & 3.7M    & 16M    & 37  & 789   & 10   \\ 
(YT) YouTube~\cite{youtube} & 6.9M    & 44M    & 38  & 4039  & 12   \\
(OK) Orkut~\cite{snap}      & 3M      & 117M   & --- & 33133 & 76   \\
    \midrule
  \end{tabular}
  \caption{Real-world graphs used in evaluation. \\ '---' indicates unlabeled graph.}
  \label{table-datasets}
\end{table}

\begin{table}
\small
  \begin{tabular}{c c | r | r | r}
    App & $G$ &
    \nomorph{} & 
    \naivemorph{} &
    \smartmorph{}  \\
    \midrule
          3-MC     & MI    & 0.14             & \textbf{0.07}      & \textbf{0.07} ($2\times$)    \\
                   & PA    & 0.97             & \textbf{0.52}      & \textbf{0.52} ($1.87\times$)  \\
                   & YT    & 6.83             & \textbf{2.06}      & \textbf{2.06} ($3.32\times$)  \\
                   & OK    & 30.15            & \textbf{10.57}     & \textbf{10.57} ($2.85\times$) \\
            \midrule
          4-MC     & MI    & 16.53            & \textbf{3.30}      & \textbf{3.30} ($5.01\times$) \\
                   & PA    & 21.27            & \textbf{6.10}      & \textbf{6.10} ($3.49\times$) \\
                   & YT    & 178.04           & \textbf{41.78}     & \textbf{41.78} ($4.26\times$) \\
                   & OK    & 37017.93         & \textbf{3138.76}   & \textbf{3138.76} ($11.79\times$) \\
            \midrule
          $p_1^V$  & MI    & 4.44             & \textbf{1.16}      & \textbf{1.16} ($3.82\times$) \\
                   & PA    & \textbf{1.11}    & 1.17               & \textbf{1.11}  \\
                   & YT    & 10.70            & \textbf{9.27}      & \textbf{9.27} ($1.15\times$) \\
                   & OK    & 913.70           & \textbf{177.79}    & \textbf{177.79} ($5.14\times$) \\
            \midrule
          $p_2^V$  & MI    & \textbf{1.89}    & 3.15               & \textbf{1.89}  \\
                   & PA    & \textbf{3.97}    & 4.35               & \textbf{3.97}  \\
                   & YT    & \textbf{31.91}   & 34.18              & \textbf{31.91} \\
                   & OK    & \textbf{1730.37} & 3010.42            & \textbf{1730.37} \\
            \midrule
          $p_3^V$  & MI    & 3.04             & \textbf{1.08}      & \textbf{1.08} ($2.81\times$) \\
                   & PA    & \textbf{0.82}    & 0.95               & \textbf{0.82} \\
                   & YT    & 8.07             & \textbf{6.79}      & \textbf{6.79} ($1.16\times$) \\
                   & OK    & 403.41           & \textbf{156.99}    & \textbf{156.99} ($2.57\times$) \\
            \midrule
          $p_5^V$  & MI    & 84.08            & \textbf{65.62}     & \textbf{65.62} ($1.28\times$) \\
                   & PA    & \textbf{8.32}    & 10.77              & \textbf{8.32}  \\
                   & YT    & \textbf{42.50}   & 83.03              & \textbf{42.50} \\
                   & OK    & 41130.03         & \textbf{9489.79}   & \textbf{9489.79} ($4.33\times$) \\
            \midrule
          $p_6^V$  & MI    & \textbf{31.98}   & 88.08              & \textbf{31.98} \\
                   & PA    & 35.23            & \textbf{25.34}     & \textbf{25.34} ($1.39\times$) \\
                   & YT    & \textbf{109.48}  & 182.31             & \textbf{109.48} \\
                   & OK    & ---              & \textbf{27014.10}  & \textbf{27014.10}         \\
            \midrule
          $p_7^V$  & MI    & \textbf{23.56}   & 423.87             & \textbf{23.56} \\
                   & PA    & \textbf{14.95}   & 58.37              & \textbf{14.95} \\
                   & YT    & \textbf{67.03}   & 429.97             & \textbf{67.03} \\
            \midrule
           $p_2^E$ & MI    & 2.09             & 4.75               & \textbf{1.93} ($1.08\times$) \\
                   & PA    & \textbf{3.29}    & 4.16               & \textbf{3.29} \\
                   & YT    & 27.84            & 21.22              & \textbf{16.69} ($1.67\times$) \\
                   & OK    & 2896.95          & 2331.59            & \textbf{1841.42} ($1.57\times$) \\
            \midrule
$\{p_2^E, p_3^E\}$ & MI    & 3.92             & 2.70               & \textbf{1.93} ($2.03\times$) \\
                   & PA    & \textbf{3.29}    & 3.62               & \textbf{3.29} \\
                   & YT    & 18.82            & 17.95              & \textbf{16.69} ($1.28\times$) \\
                   & OK    & 2908.70          & 1997.15            & \textbf{1845.09} ($1.58\times$) \\
            \midrule
$\{p_5^V, p_6^V\}$ & MI    & 116.06           & \textbf{88.08}     & \textbf{88.08} ($1.32\times$) \\
                   & PA    & 43.55            & \textbf{25.34}     & \textbf{25.34} ($1.56\times$) \\
                   & YT    & \textbf{151.98}  & 182.31             & \textbf{151.98} \\
                   & OK    & ---              & \textbf{27014.10}  & \textbf{27014.10} \\
            \midrule
          3-FSM    & MI    & 127.16           & 101.60             & \textbf{71.31} ($1.78\times$) \\
                   & PA    & 644.74           & 3772.41            & \textbf{639.75} ($1.01\times$) \\
                   & YT    & \textbf{491.07}  & 550.29             & \textbf{491.07}  \\
    \midrule
\end{tabular}
\caption{Execution times in seconds (including time spent morphing patterns) with and without pattern morphing. 
  \\`---' represents executions that did not finish within 24 hours.}

\label{table-results}
\end{table}

\subsection{Overview of Performance Results}
\rtab{table-results} shows the performance of pattern morphing for all applications across different data graphs. Overall, \naively{} using pattern morphing shows a benefit of 1.16-11.79$\times$, with potential slowdowns in only few cases.  
The cost-based pattern morphing correctly captures those cases and retains high performance benefits across all cases. Below we analyze how pattern morphing improves each of the applications.

\subsection{Pattern Morphing for Motif Counting} 
The greatest benefits for pattern morphing occur in the motif counting application, since all superpatterns are already included in the input pattern set. As a result, the morphing engine always generates an optimal morphing set. Furthermore, counting is a very cheap aggregation, which makes morphing vertex-induced patterns very profitable in terms of the set operation versus aggregation tradeoff.

Pattern morphing yielded a better execution time in all the motif counting experiments. For 4-motif counting on the large Orkut graph, pattern morphing yields $11.79\times$ speedup over the baseline that doesn't use pattern morphing.

\subsection{Pattern Morphing for Pattern Matching}
Matching individual patterns is a test of the worst case for pattern morphing, since several extra superpatterns need to be matched in order to morph a given single pattern. Despite this, our cost-based patten morphing optimizer finds opportunities for a $5\times$ speedup when matching $p_1^V$ on Orkut, and $4.33\times$ when matching $p_5^V$ on Orkut. Additionally, with pattern morphing, Peregrine was able to match $p_6^V$ on Orkut in 7.5 hours, whereas without morphing the execution did not finish within 24 hours.

We also match groups of patterns to observe whether it amortizes the cost of the extra superpatterns. Interestingly, when matching only $p_6^V$, morphing it would lead to a slowdown on the MiCo graph, but when the input pattern set consists of both $p_5^V$ and $p_6^V$, pattern morphing yields a $1.32\times$ speedup.

\begin{table}
  \label{table-altset}
  \begin{tabular}{c c | l}
    App & $G$ & Alt. Set\\
    \midrule
  $p_1^V$  & MI & $\{p_1^E, p_3^E, p_4\}$ \\
           & PA & $\{p_1^E, p_3^E, p_4\}$ \\
           & YT & $\{p_1^E, p_3^E, p_4\}$ \\
           & OK & $\{p_1^E, p_3^E, p_4\}$ \\
    \midrule
  $p_2^V$  & MI & $\{p_2^V\}$ \\
           & PA & $\{p_2^V\}$ \\
           & YT & $\{p_2^V\}$ \\
           & OK & $\{p_2^V\}$ \\
    \midrule
  $p_2^E$  & MI & $\{p_2^V, p_3^E, p_4\}$ \\
           & PA & $\{p_2^E\}$ \\
           & YT & $\{p_2^V, p_3^E, p_4\}$ \\
           & OK & $\{p_2^V, p_3^E, p_4\}$ \\
    \midrule
  $p_3^V$  & MI & $\{p_3^E, p_4\}$ \\
           & PA & $\{p_3^V\}$ \\
           & YT & $\{p_3^E, p_4\}$ \\
           & OK & $\{p_3^E, p_4\}$ \\
    \midrule
$\{p_2^E, p_3^E\}$ & MI & $\{p_2^V, p_3^E, p_4\}$ \\
                   & PA & $\{p_2^E, p_3^E\}$      \\
                   & YT & $\{p_2^V, p_3^E, p_4\}$ \\
                   & OK & $\{p_2^V, p_3^E, p_4\}$ \\
    \midrule
  \end{tabular}
  \caption{Alternative pattern sets used in \smartmorph{}.}
  \label{tab-altsets}
\end{table}

\rtab{tab-altsets} lists the alternate pattern sets selected by \smartmorph{} for some of the input patterns across different data graphs. As we can see, the cost-based pattern morphing optimizer selects different outcomes that minimize the pattern set costs. For instance, it correctly detects that morphing $p_3^V$ is beneficial for all graphs except the Patents graph. Similarly, while \naivemorph{} ends up morphing both patterns in $\{p_2^E, p_3^E\}$, \smartmorph{} correctly morphs only $p_2^E$ for all graphs except the Patents graph.

\subsection{Pattern Morphing for Frequent Subgraph Mining}
We ran 3-FSM on the labeled datasets MiCo, Patents, and YouTube with support measures of 4000, 23000, and 300000, respectively (similar to \cite{arabesque,fractal,peregrine}). On the MiCo graph, pattern morphing reduces the execution time by 43\%, but the benefit is marginal on Patents, and none on YouTube since the cost-based pattern morphing optimizer ends up choosing not to morph the input pattern set. These performance differences are inherent to the data graph, and depend on the distribution of labels among vertices.

\section{Related Work}
There has been a variety of research to develop efficient graph mining solutions. To the best of our knowledge, this is the first formal treatment of generic structure-based transformations of input patterns in graph mining systems.

\paragraph{General-Purpose Graph Mining.} 
Several general-purpose graph mining systems have recently emerged in the literature~\cite{arabesque,fractal,rstream,peregrine,automine,pangolin}.
Arabesque~\cite{arabesque} and Fractal~\cite{fractal} are exploration-based
general-purpose graph mining systems which mine subgraphs through iterative
extensions by edges or vertices. RStream~\cite{rstream} is an out-of-core graph mining
system which structures computations as relational operations on tables of
subgraphs and stores intermediate state on disk. Pangolin~\cite{pangolin} is a
recent exploration-based graph mining system which supports offloading
computation to GPU. ASAP~\cite{asap} is a general-purpose approximate
graph mining system allowing users to navigate the tradeoff between error
and performance by analyzing graph mining computations.
\cite{automine} compiles input patterns into exploration programs for graph mining by applying tournament algorithm to select efficient set operation schedules. 
Peregrine~\cite{peregrine} is a pattern-aware general-purpose
graph mining system which exploits structural and label properties of input
patterns to efficiently mine their matches.

All these works focus on efficiently processing graph mining applications
as expressed by the user, by optimizing the exploration plans for the 
subgraph structures of interest. In contrast, we focus on how graph mining 
applications can be transformed by morphing the subgraph structures of interest
to enable more efficient execution. 
Pattern morphing can be integrated with these systems as an add-on module 
to improve their performance.

\paragraph{Motif Counting.} 
A myriad of research has been conducted on
efficient algorithms for mining motifs. \cite{po} developed a technique to break
symmetries of patterns by enforcing a partial-ordering on pattern vertices.
\cite{efficient-graphlets} is a fast, parallel algorithm for counting size 3 and
4 motifs using combinatorial identities. RAGE~\cite{rage} provides a method for
efficiently computing edge-induced size-4 motifs, and equations for converting
the results to those for vertex-induced motifs. \cite{orbit-counting}
efficiently lists the automorphism groups of pattern vertices and uses them to
compute exact counts for motifs with 2-5 vertices. There is also research
on techniques for computing approximate counts of motifs~\cite{graft,approxg}.

All of these works focus on counting patterns of fixed size using 
explicit formulae for those specific patterns. 
Our work differs by developing a general algebra for morphing arbitrary
patterns and transforming arbitrary graph mining applications. 
Pattern
morphing is not restricted by pattern types, pattern size, or application semantics.

\paragraph{Pattern Matching.} 
These works aim to devise more efficient
subgraph isomorphism solutions using sophisticated analysis of data and query
graphs~\cite{dualsim,turboiso,turboflux,daf,prunejuice,qfrag,postponecart,ceci}.
DualSIM~\cite{dualsim} is an out-of-core pattern matching system that uses
pattern-core decomposition to efficiently map pattern vertices onto data
vertices. CECI~\cite{ceci}, a distributed pattern matching system, builds
clustered indexes in the data graph based on the query pattern topology and uses
these indexes to quickly materialize matches. \cite{turboiso,daf,postponecart}
decompose the query graph into tree and non-tree edges and match them
efficiently guided by sophisticated heuristics. These works are focused on
algorithms for pattern matching, which is orthogonal to the issue of what
patterns to match.

\paragraph{Frequent Subgraph Mining.} 
GraMi~\cite{mico} performs
efficient FSM by enumerating a minimal set of matches to determine frequency of
patterns. ScaleMine~\cite{scalemine} samples matches of a pattern in the data
graph to build a statistical profile of pattern frequency, which guides the
computation of exact results. IncGM+\cite{inc-fsm} is a system for continuous
frequent subgraph mining that prunes the search space using a set of infrequent
subgraphs which are adjacent to frequent subgraphs. None of these works are
pattern-based, and instead view the FSM computation in terms of subgraphs of the
data graph.

\paragraph{Graph Processing.} 
Many works address computations on static and dynamic graphs~\cite{pregel,powergraph,galois,ligra,graphbolt,chaos,graphx,gemini,kickstarter,aspen,aspire,coral,gps,pgxd,input-reduction}.
Research in this space is typically applicable to computations on vertex
and edge values, as opposed to graph mining problems which are concerned with
substructures of the data graph. Techniques like 
\cite{input-reduction} develop custom transformations for specific substructures
to shrink the data graph and speed up value propagation.

\section{Conclusion}
We presented \textsc{Pattern Morphing} to enable structure-aware algebra over patterns. Pattern Morphing \emph{morphs} a given set of patterns into alternative pattern sets that can be used to compute accurate results for the original set of patterns. We developed the theoretical foundations for Pattern Morphing that guarantee accurate conversion of matches across different patterns, and further enable direct conversion of results from application-specific operations over explored matches. To evaluate the effectiveness of pattern morphing, we implemented an automatic pattern morphing engine and incorporated it in the Peregrine system. Our results showed that pattern morphing significantly improved the performance of various graph mining applications. Being a generic technique at pattern-level, pattern morphing can be incorporated in existing graph mining systems, and it can be applied to various problems beyond graph mining.

\balance
\bibliographystyle{plain}
\bibliography{paper}

\begin{thebibliography}{10}

\bibitem{inc-fsm}
E.~{Abdelhamid}, M.~{Canim}, M.~{Sadoghi}, B.~{Bhattacharjee}, Y.~{Chang}, and
  P.~{Kalnis}.
\newblock Incremental frequent subgraph mining on large evolving graphs.
\newblock {\em IEEE Transactions on Knowledge and Data Engineering},
  29(12):2710--2723, 2017.

\bibitem{scalemine}
Ehab Abdelhamid, Ibrahim Abdelaziz, Panos Kalnis, Zuhair Khayyat, and Fuad
  Jamour.
\newblock Scalemine: Scalable parallel frequent subgraph mining in a single
  large graph.
\newblock In {\em Proceedings of the International Conference for High
  Performance Computing, Networking, Storage and Analysis (SC '16)}, pages
  61:1--61:12, 2016.

\bibitem{efficient-graphlets}
Nesreen~K. Ahmed, Jennifer Neville, Ryan~A. Rossi, and Nick Duffield.
\newblock Efficient graphlet counting for large networks.
\newblock In {\em IEEE International Conference on Data Mining (ICDM '15)},
  pages 1--10, 2015.

\bibitem{ceci}
Bibek Bhattarai, Hang Liu, and H.~Howie Huang.
\newblock Ceci: Compact embedding cluster index for scalable subgraph matching.
\newblock In {\em Proceedings of the 2019 International Conference on
  Management of Data}, SIGMOD '19, page 1447–1462, New York, NY, USA, 2019.
  Association for Computing Machinery.

\bibitem{postponecart}
Fei Bi, Lijun Chang, Xuemin Lin, Lu~Qin, and Wenjie Zhang.
\newblock Efficient subgraph matching by postponing cartesian products.
\newblock In {\em Proceedings of the ACM International Conference on Management
  of Data (SIGMOD '16)}, pages 1199--1214, 2016.

\bibitem{fsm-mni}
Bjorn Bringmann and Siegfried Nijssen.
\newblock What is frequent in a single graph?
\newblock In {\em Advances in Knowledge Discovery and Data Mining: 12th
  Pacific-Asia Conference}, volume 5012, pages 858--863, 2008.

\bibitem{gminer}
Hongzhi Chen, Miao Liu, Yunjian Zhao, Xiao Yan, Da~Yan, and James Cheng.
\newblock G-miner: An efficient task-oriented graph mining system.
\newblock In {\em Proceedings of the European Conference on Computer Systems
  (EuroSys '18)}, pages 32:1--32:12, 2018.

\bibitem{pangolin}
Xuhao Chen, Roshan Dathathri, Gurbinder Gill, and Keshav Pingali.
\newblock Pangolin: An efficient and flexible graph mining system on cpu and
  gpu.
\newblock {\em Proc. VLDB Endow.}, 13(10):1190–1205, April 2020.

\bibitem{youtube}
Xu~Cheng, C.~Dale, and Jiangchuan Liu.
\newblock Statistics and social network of youtube videos.
\newblock In Hans van~den Berg and Gunnar Karlsson, editors, {\em Quality of
  Service, 2008. IWQoS 2008. 16th International Workshop on}, pages 229--238.
  IEEE, June 2008.

\bibitem{fsmcompvis}
Wei-Ta Chu and Ming-Hung Tsai.
\newblock Visual pattern discovery for architecture image classification and
  product image search.
\newblock In {\em Proceedings of the ACM International Conference on Multimedia
  Retrieval (ICMR '12)}, pages 1--8, 2012.

\bibitem{aspen}
Laxman Dhulipala, Guy~E Blelloch, and Julian Shun.
\newblock Low-latency graph streaming using compressed purely-functional trees.
\newblock In {\em Proceedings of the ACM SIGPLAN Conference on Programming
  Language Design and Implementation (PLDI '19)}, pages 918--934, 2019.

\bibitem{fractal}
Vinicius Dias, Carlos H.~C. Teixeira, Dorgival Guedes, Wagner Meira, and
  Srinivasan Parthasarathy.
\newblock Fractal: A general-purpose graph pattern mining system.
\newblock In {\em Proceedings of the ACM International Conference on Management
  of Data (SIGMOD '19)}, pages 1357--1374, 2019.

\bibitem{mico}
Mohammed Elseidy, Ehab Abdelhamid, Spiros Skiadopoulos, and Panos Kalnis.
\newblock Grami: Frequent subgraph and pattern mining in a single large graph.
\newblock In {\em Proceedings of the VLDB Endowment (PVLDB '14)}, pages
  517--528, 2014.

\bibitem{fsm-sec}
M.~{Fan}, J.~{Liu}, X.~{Luo}, K.~{Chen}, T.~{Chen}, Z.~{Tian}, X.~{Zhang},
  Q.~{Zheng}, and T.~{Liu}.
\newblock Frequent subgraph based familial classification of android malware.
\newblock In {\em 2016 IEEE 27th International Symposium on Software
  Reliability Engineering (ISSRE)}, pages 24--35, 2016.

\bibitem{powergraph}
Joseph~E. Gonzalez, Yucheng Low, Haijie Gu, Danny Bickson, and Carlos Guestrin.
\newblock Powergraph: Distributed graph-parallel computation on natural graphs.
\newblock In {\em Proceedings of the USENIX Conference on Operating Systems
  Design and Implementation (OSDI '12)}, pages 17--30, 2012.

\bibitem{graphx}
Joseph~E. Gonzalez, Reynold~S. Xin, Ankur Dave, Daniel Crankshaw, Michael~J.
  Franklin, and Ion Stoica.
\newblock {GraphX}: Graph processing in a distributed dataflow framework.
\newblock In {\em Proceedings of the USENIX Conference on Operating Systems
  Design and Implementation (OSDI '14)}, pages 599--613, 2014.

\bibitem{po}
Joshua~A. Grochow and Manolis Kellis.
\newblock Network motif discovery using subgraph enumeration and
  symmetry-breaking.
\newblock In {\em Research in Computational Molecular Biology}, pages 92--106,
  2007.

\bibitem{patents}
Bronwyn Hall, Adam Jaffe, and Manuel Trajtenberg.
\newblock The nber patent citation data file: Lessons, insights and
  methodological tools.
\newblock {\em NBER Working Paper 8498}, 2001.

\bibitem{daf}
Myoungji Han, Hyunjoon Kim, Geonmo Gu, Kunsoo Park, and Wook-Shin Han.
\newblock Efficient subgraph matching: Harmonizing dynamic programming,
  adaptive matching order, and failing set together.
\newblock In {\em Proceedings of the ACM International Conference on Management
  of Data (SIGMOD '19)}, pages 1429--1446, 2019.

\bibitem{turboiso}
Wook-Shin Han, Jinsoo Lee, and Jeong-Hoon Lee.
\newblock Turboiso: Towards ultrafast and robust subgraph isomorphism search in
  large graph databases.
\newblock In {\em Proceedings of the ACM International Conference on Management
  of Data (SIGMOD '13)}, pages 337--348, 2013.

\bibitem{orbit-counting}
Tomaz Hocevar and Janez Demsar.
\newblock A combinatorial approach to graphlet counting.
\newblock {\em Bioinformatics (Oxford, England)}, 30:559--565, 12 2013.

\bibitem{pgxd}
Sungpack Hong, Siegfried Depner, Thomas Manhardt, Jan Van Der~Lugt, Merijn
  Verstraaten, and Hassan Chafi.
\newblock Pgx.d: a fast distributed graph processing engine.
\newblock In {\em Proceedings of the International Conference for High
  Performance Computing, Networking, Storage and Analysis (SC '15)}, pages
  1--12, 2015.

\bibitem{asap}
Anand~Padmanabha Iyer, Zaoxing Liu, Xin Jin, Shivaram Venkataraman, Vladimir
  Braverman, and Ion Stoica.
\newblock {ASAP}: Fast, approximate graph pattern mining at scale.
\newblock In {\em Proceedings of the USENIX Symposium on Operating Systems
  Design and Implementation (OSDI '18)}, pages 745--761, Carlsbad, CA, 2018.

\bibitem{peregrine}
Kasra Jamshidi, Rakesh Mahadasa, and Keval Vora.
\newblock Peregrine: A pattern-aware graph mining system.
\newblock In {\em Proceedings of the Fifteenth European Conference on Computer
  Systems}, EuroSys '20, New York, NY, USA, 2020. Association for Computing
  Machinery.

\bibitem{dualsim}
Hyeonji Kim, Juneyoung Lee, Sourav~S. Bhowmick, Wook-Shin Han, JeongHoon Lee,
  Seongyun Ko, and Moath~H.A. Jarrah.
\newblock Dualsim: Parallel subgraph enumeration in a massive graph on a single
  machine.
\newblock In {\em Proceedings of the ACM International Conference on Management
  of Data (SIGMOD '16)}, pages 1231--1245, 2016.

\bibitem{turboflux}
Kyoungmin Kim, In~Seo, Wook-Shin Han, Jeong-Hoon Lee, Sungpack Hong, Hassan
  Chafi, Hyungyu Shin, and Geonhwa Jeong.
\newblock Turboflux: A fast continuous subgraph matching system for streaming
  graph data.
\newblock In {\em Proceedings of the ACM International Conference on Management
  of Data (SIGMOD '18)}, pages 411--426, 2018.

\bibitem{input-reduction}
Amlan Kusum, Keval Vora, Rajiv Gupta, and Iulian Neamtiu.
\newblock Efficient processing of large graphs via input reduction.
\newblock In {\em Proceedings of the 25th ACM International Symposium on
  High-Performance Parallel and Distributed Computing}, HPDC '16, page
  245–257, New York, NY, USA, 2016. Association for Computing Machinery.

\bibitem{pregel}
Grzegorz Malewicz, Matthew~H. Austern, Aart J.~C. Bik, James~C. Dehnert, Ilan
  Horn, Naty Leiser, Grzegorz Czajkowski, and Google Inc.
\newblock Pregel: A system for large-scale graph processing.
\newblock In {\em Proceedings of the ACM International Conference on Management
  of Data (SIGMOD '10)}, pages 135--146, 2010.

\bibitem{rage}
D.~Marcus and Y.~Shavitt.
\newblock Rage – a rapid graphlet enumerator for large networks.
\newblock {\em Computer Networks}, 56(2):810 -- 819, 2012.

\bibitem{graphbolt}
Mugilan Mariappan and Keval Vora.
\newblock Graphbolt: Dependency-driven synchronous processing of streaming
  graphs.
\newblock In {\em Proceedings of the European Conference on Computer Systems
  (EuroSys '19)}, pages 25:1--25:16, 2019.

\bibitem{automine}
Daniel Mawhirter and Bo~Wu.
\newblock Automine: Harmonizing high-level abstraction and high performance for
  graph mining.
\newblock In {\em Proceedings of the ACM Symposium on Operating Systems
  Principles (SOSP '19)}, pages 509--523, 2019.

\bibitem{approxg}
Daniel Mawhirter, Bo~Wu, Dinesh Mehta, and Chao Ai.
\newblock Approxg: Fast approximate parallel graphlet counting through accuracy
  control.
\newblock In {\em IEEE/ACM International Symposium on Cluster, Cloud and Grid
  Computing (CCGRID '18)}, pages 533--542, 2018.

\bibitem{motifsbiology}
Ron Milo, Shai Shen-Orr, Shalev Itzkovitz, N~Kashtan, Dmitri Chklovskii, and
  Uri Alon.
\newblock Network motifs: Simple building blocks of complex networks.
\newblock {\em Science}, 298:824--7, 2002.

\bibitem{fsmbiology}
Aida Mrzic, Pieter Meysman, Wout Bittremieux, Pieter Moris, Boris Cule, Bart
  Goethals, and Kris Laukens.
\newblock Grasping frequent subgraph mining for bioinformatics applications.
\newblock {\em BioData Mining}, 11(1):20, 2018.

\bibitem{galois}
Donald Nguyen, Andrew Lenharth, and Keshav Pingali.
\newblock A lightweight infrastructure for graph analytics.
\newblock In {\em Proceedings of the ACM Symposium on Operating Systems
  Principles (SOSP '13)}, pages 456--471, 2013.

\bibitem{graft}
M.~{Rahman}, M.~A. {Bhuiyan}, and M.~{Al Hasan}.
\newblock Graft: An efficient graphlet counting method for large graph
  analysis.
\newblock {\em IEEE Transactions on Knowledge and Data Engineering},
  26(10):2466--2478, 2014.

\bibitem{prunejuice}
Tahsin Reza, Matei Ripeanu, Nicolas Tripoul, Geoffrey Sanders, and Roger
  Pearce.
\newblock Prunejuice: Pruning trillion-edge graphs to a precise
  pattern-matching solution.
\newblock In {\em Proceedings of the International Conference for High
  Performance Computing, Networking, Storage, and Analysis (SC '18)}, pages
  21:1--21:17, 2018.

\bibitem{motifs-sna}
Rahmtin Rotabi, Krishna Kamath, Jon Kleinberg, and Aneesh Sharma.
\newblock Detecting strong ties using network motifs.
\newblock In {\em Proceedings of the 26th International Conference on World
  Wide Web Companion}, WWW '17 Companion, page 983–992, Republic and Canton
  of Geneva, CHE, 2017. International World Wide Web Conferences Steering
  Committee.

\bibitem{chaos}
Amitabha Roy, Laurent Bindschaedler, Jasmina Malicevic, and Willy Zwaenepoel.
\newblock Chaos: Scale-out graph processing from secondary storage.
\newblock In {\em Proceedings of the ACM Symposium on Operating Systems
  Principles (SOSP '15)}, pages 410--424, 2015.

\bibitem{gps}
Semih Salihoglu and Jennifer Widom.
\newblock Gps: A graph processing system.
\newblock In {\em Proceedings of the International Conference on Scientific and
  Statistical Database Management (SSDBM '13)}, 2013.

\bibitem{motifs-sna2}
Soumajyoti Sarkar, Ruocheng Guo, and Paulo Shakarian.
\newblock Using network motifs to characterize temporal network evolution
  leading to diffusion inhibition.
\newblock {\em Social Network Analysis and Mining}, 9(1), April 2019.

\bibitem{qfrag}
Marco Serafini, Gianmarco De~Francisci~Morales, and Georgos Siganos.
\newblock Qfrag: Distributed graph search via subgraph isomorphism.
\newblock In {\em Proceedings of the Symposium on Cloud Computing (SoCC '17)},
  pages 214--228, 2017.

\bibitem{ligra}
Julian Shun and Guy~E. Blelloch.
\newblock Ligra: A lightweight graph processing framework for shared memory.
\newblock In {\em Proceedings of the ACM SIGPLAN Symposium on Principles and
  Practice of Parallel Programming (PPoPP '13)}, pages 135--146, 2013.

\bibitem{arabesque}
Carlos H.~C. Teixeira, Alexandre~J. Fonseca, Marco Serafini, Georgos Siganos,
  Mohammed~J. Zaki, and Ashraf Aboulnaga.
\newblock Arabesque: A system for distributed graph mining.
\newblock In {\em Proceedings of the ACM Symposium on Operating Systems
  Principles (SOSP '15)}, pages 425--440, 2015.

\bibitem{kickstarter}
Keval Vora, Rajiv Gupta, and Guoqing Xu.
\newblock {KickStarter}: Fast and accurate computations on streaming graphs via
  trimmed approximations.
\newblock In {\em Proceedings of the International Conference on Architectural
  Support for Programming Languages and Operating Systems (ASPLOS '17)}, pages
  237--251, 2017.

\bibitem{aspire}
Keval Vora, Sai~Charan Koduru, and Rajiv Gupta.
\newblock Aspire: Exploiting asynchronous parallelism in iterative algorithms
  using a relaxed consistency based dsm.
\newblock In {\em Proceedings of SIGPLAN International Conference on Object
  Oriented Programming Systems Languages and Applications (OOPSLA '14)}, page
  861–878, 2014.

\bibitem{coral}
Keval Vora, Chen Tian, Rajiv Gupta, and Ziang Hu.
\newblock Coral: Confined recovery in distributed asynchronous graph
  processing.
\newblock In {\em Proceedings of the International Conference on Architectural
  Support for Programming Languages and Operating Systems (ASPLOS '17)}, page
  223–236, 2017.

\bibitem{rstream}
Kai Wang, Zhiqiang Zuo, John Thorpe, Tien~Quang Nguyen, and Guoqing~Harry Xu.
\newblock Rstream: Marrying relational algebra with streaming for efficient
  graph mining on a single machine.
\newblock In {\em Proceedings of the USENIX Conference on Operating Systems
  Design and Implementation (OSDI '18)}, pages 763--782, 2018.

\bibitem{motifs-bio}
Li~Wang, Hongying Zhao, Jing Li, Yingqi Xu, Yujia Lan, Wenkang Yin, Xiaoqin
  Liu, Lei Yu, Shihua Lin, Michael~Yifei Du, Xia Li, Yun Xiao, and Yunpeng
  Zhang.
\newblock Identifying functions and prognostic biomarkers of network motifs
  marked by diverse chromatin states in human cell lines.
\newblock {\em Oncogene}, 39(3):677--689, September 2019.

\bibitem{maxclique-sna}
X.~{Wen}, W.~{Chen}, Y.~{Lin}, T.~{Gu}, H.~{Zhang}, Y.~{Li}, Y.~{Yin}, and
  J.~{Zhang}.
\newblock A maximal clique based multiobjective evolutionary algorithm for
  overlapping community detection.
\newblock {\em IEEE Transactions on Evolutionary Computation}, 21(3):363--377,
  2017.

\bibitem{snap}
Jaewon Yang and Jure Leskovec.
\newblock Defining and evaluating network communities based on ground-truth.
\newblock {\em Knowledge and Information Systems}, 42(1):181--213, 2015.

\bibitem{gemini}
Xiaowei Zhu, Wenguang Chen, Weimin Zheng, and Xiaosong Ma.
\newblock Gemini: A computation-centric distributed graph processing system.
\newblock In {\em Proceedings of the USENIX Symposium on Operating Systems
  Design and Implementation (OSDI '16)}, pages 301--316, 2016.

\end{thebibliography}

\end{document}